\title{Detecting Wave Function Collapse Without Prior Knowledge}
\author{
Charles Wesley Cowan\footnote{Department of Mathematics, Rutgers University, Hill Center,  
     110 Frelinghuysen Road, Piscataway, NJ 08854-8019, USA.}\ \footnote{E-mail:
     cwcowan@math.rutgers.edu}\ \ and 
Roderich Tumulka$^*$\footnote{E-mail: tumulka@math.rutgers.edu}
}
\date{July 23, 2015}
\newcommand{\Hilbert}{\mathscr{H}}
\newcommand{\E}{\mathcal{E}}
\newcommand{\EEE}{\mathbb{E}}
\newcommand{\PPP}{\mathbb{P}}
\newcommand{\RRR}{\mathbb{R}}
\newcommand{\CCC}{\mathbb{C}}
\newcommand{\SSS}{\mathbb{S}}
\newcommand{\NNN}{\mathbb{N}}
\newcommand{\scp}[2]{\langle #1|#2 \rangle}
\newcommand{\pr}[1]{| #1 \rangle \langle #1 |}
\newcommand{\be}{\begin{equation}}
\newcommand{\ee}{\end{equation}}
\DeclareMathOperator{\tr}{tr}
\DeclareMathOperator{\diag}{diag}
\theoremstyle{plain}
	\newtheorem{prop}{Proposition}
	\newtheorem{theorem}{Theorem}
	\newtheorem{corollary}{Corollary}
	\newtheorem{conjecture}{Conjecture}
\begin{document}
\maketitle

\begin{abstract}
We are concerned with the problem of detecting with high probability whether a wave function has collapsed or not, in the following framework: A quantum system with a $d$-dimensional Hilbert space is initially in state $\psi$; with probability $0<p<1$, the state collapses relative to the orthonormal basis $b_1,\ldots,b_d$. That is, the final state $\psi'$ is random; it is $\psi$ with probability $1-p$ and $b_k$ (up to a phase) with $p$ times Born's probability $|\scp{b_k}{\psi}|^2$. Now an experiment on the system in state $\psi'$ is desired that provides information about whether or not a collapse has occurred. Elsewhere \cite{CT12a}, we identify and discuss the optimal experiment in case that $\psi$ is either known or random with a known probability distribution. Here we present results about the case that no \textit{a priori} information about $\psi$ is available, while we regard $p$ and $b_1,\ldots,b_d$ as known. For certain values of $p$, we show that the set of $\psi$s for which any experiment $\E$ is more reliable than blind guessing is at most half the unit sphere; thus, in this regime, any experiment is of questionable use, if any at all. Remarkably, however, there are other values of $p$ and experiments $\E$ such that the set of $\psi$s for which $\E$ is more reliable than blind guessing has measure greater than half the sphere, though with a conjectured maximum of 64\% of the sphere.

\medskip

Key words: collapse of the wave function; limitations to knowledge; absolute uncertainty; empirically undecidable; quantum measurements; foundations of quantum mechanics; Ghirardi-Rimini-Weber (GRW) theory; random wave function.
\end{abstract}

\section{Introduction}\label{sec:introduction}

We consider a quantum system whose wave function may or may not have collapsed, and ask to what extent  any experiment on the system can provide us with information about whether it has collapsed. The answer depends on how much is known about the initial wave function, and we focus here on the case that the initial wave function is completely unknown, with no prior assumptions as to its state. 

Our motivation comes from ``spontaneous-collapse'' theories of quantum mechanics, such as that of Ghirardi, Rimini, and Weber (GRW) \cite{GRW86,GTZ07}, according to which collapses of the wave function occur as objective physical events, so it is of interest to detect these events when they occur, and it is of interest to note that no experiment is able to detect a collapse with 100\%\ reliability, even when the initial wave function is known; see \cite{CT12b} for a discussion of the implications of our results for the GRW theory. Yet, our results are not limited to the GRW theory but apply equally to the collapses considered in orthodox quantum theory.

The problem can be formulated mathematically as follows. Consider a quantum system $S$ with Hilbert space $\Hilbert$ of finite dimension $d \in \NNN$, $d \geq 2$. Let
\begin{equation}\label{eqn:SSSdef}
\SSS = \{\psi \in \Hilbert: \|\psi\|=1\}
\end{equation}
denote the unit sphere in $\Hilbert$, and let $B = \{b_1, \ldots, b_d\}$ be an orthonormal basis of $\Hilbert$. Suppose that the ``initial'' wave function of $S$ was $\psi\in\SSS$ but with probability $p$ a collapse relative to $B$ has occurred. That is, suppose that the wave function of $S$ is the $\SSS$-valued random variable $\psi'$ defined to be
\begin{equation}\label{eqn:psi-prime-def}
\psi' = \begin{cases}
\psi & \text{with probability }1-p\\
\frac{\scp{b_k}{\psi}}{|\scp{b_k}{\psi}|}b_k & \text{with probability }p\,\bigl| \scp{b_k}{\psi} \bigr|^2\text{ for }k=1,\ldots,d\,.
\end{cases}
\end{equation}
We take $p$ and $B$ to be known;\footnote{Actually, the problem depends on $B$ only through its equivalence class, with the basis $\{e^{i\theta_1}b_1,\ldots,e^{i\theta_d}b_d\}$ regarded as equivalent to $B$ for arbitrary $\theta_1,\ldots,\theta_d\in\RRR$. So we take the equivalence class of $B$ (or, equivalently, the collection of $d$ 1-dimensional subspaces $\CCC b_k$) to be known; nevertheless, we often find it convenient to speak as if $B$ were given.} $\psi$ may or may not be known.

Now we desire an experiment on $S$ that reveals, or at least provides probabilistic information about, whether a collapse has occurred. A relevant fact is that for every conceivable experiment $\E$ that can be carried out on $S$ (with random outcome $Z$ from some value space $\mathcal{Z}$), there is a positive-operator-valued measure (POVM) $M(\cdot)$ on $\mathcal{Z}$ acting on $\Hilbert$ such that the probability distribution of $Z$, when $\E$ is carried out on $S$ with wave function $\psi\in\SSS$, is given by
\begin{equation}\label{eqn:POVM-probability-statement}
\PPP(Z\in\Delta) = \scp{\psi}{M(\Delta)|\psi}
\end{equation}
for all measurable sets $\Delta\subseteq\mathcal{Z}$. 
The statement containing \eqref{eqn:POVM-probability-statement} was proved for GRW theory in \cite{GTZ07} and for Bohmian mechanics in \cite{DGZ04}. In orthodox quantum mechanics, which does not allow for an analysis of measurement processes, it cannot be proved rigorously, but it can be derived from the assumption that, after $\E$, a quantum measurement of the position observable of the pointer of $\E$'s apparatus will yield the result of $\E$. However, it is important to note that while every experiment $\E$ can be characterized in terms of a POVM, it is not necessarily true that every POVM is associated with a realizable experiment.  

For our purposes, so as to answer the question ``Did collapse occur?'', it suffices to consider \emph{yes-no} experiments, i.e., those with $\mathcal{Z}=\{\text{yes},\text{no}\}$; for them the POVM $M(\cdot)$ is determined by the operator
\begin{equation}\label{eqn:povm-yes-operator}
E=M(\{\text{yes}\})\,.
\end{equation}
Writing $I$ for the identity operator, $I-E$ is the operator corresponding to $\text{no}$, $M(\{\text{no}\})$. By the definition of a POVM, $E$ must be a positive\footnote{We take the word ``positive'' for an operator to mean $\braket{\psi|E|\psi}\geq 0$ for all $\psi\in\Hilbert$, equivalently to its matrix (relative to any orthonormal basis) being positive semi-definite; we  denote this by $E\geq 0$.} operator such that $I-E$ is positive too; it is otherwise arbitrary. Thus, \emph{we can characterize every possible yes-no experiment $\E$ mathematically by a self-adjoint operator $E$ with spectrum in $[0,1]$, $0 \leq E \leq I$.} As noted, this is a larger set than the class of ``realizable'' experiments, but by proving results over the set of POVMs (in this case of yes-no experiments, proving results over the set of self-adjoint operators with the appropriate spectrum), the results necessarily cover all possible realizable experiments.

We define the \emph{reliability} $R(\E)$ of a yes-no experiment $\E$ to be the probability that it will, when applied to the system $S$ in the state $\psi'$, correctly retrodict whether a collapse has occurred. Since $R(\E)$ depends on $\E$ only through $E$, we also write $R(E)$ for it. We use this quantity to measure how useful an experiment is for our purposes---how much an experimenter might ``rely'' on the experiment's outcome. It is a basic fact \cite{CT12a} that perfect reliability is impossible; i.e., always $R(\E)<1$. Values of reliability should be compared to that of \emph{blind guessing}, the trivial experiment $\E_\emptyset$ that always yields the outcome ``yes'' if $p>1/2$ and always ``no'' if $p\leq 1/2$.

Elsewhere \cite{CT12a}, we discuss the case in which $\psi$ is either known or random with a known probability distribution; we briefly report the main results in Sec.~\ref{sec:report} below. In the central results of this paper, we take $\psi$ to be unknown, with no prior information. This case is relevant, e.g., if we consider a world governed by the GRW theory whose inhabitants want to detect the collapses during a given time interval in a given system $S$; for them, a machine would not qualify as a ``collapse detector'' unless it works irrespectively of the initial state of $S$.

It is not obvious what it should mean for an experiment to work for unknown $\psi$. One approach, with a Bayesian flavor, would be to take this to mean that the experiment is more reliable than blind guessing for a random $\psi$ from a \emph{uniform distribution}; see Sec.~\ref{sec:report} or \cite{CT12a} for results. However, we question whether an unknown $\psi$ can truly be assumed to be uniformly distributed. For instance, if the experimenter were to receive only a single $\psi$ to experiment on, what would its ``distribution'' really mean? So we follow another approach. Obviously, any experiment $\E$ that we choose will have high reliability for some $\psi$s and low reliability for others. We may therefore wish to maximize the size of the set
\be
S_\E = \bigl\{ \psi\in\SSS(\Hilbert): R_\psi(\E)>R_\psi(\E_\emptyset) \bigr\}\,,
\ee
the set of $\psi$s for which $\E$ is more reliable than blind guessing $\E_\emptyset$; $R_\psi(\E)$ denotes the reliability of $\E$ for fixed $\psi$. The natural measure of ``size'' is the (normalized) uniform measure $u$ on $\SSS$.

We thus ask, how big can we make $u(S_\E)$ by suitable choice of $\E$? If we can bring it close to 1 then that experiment might be quite attractive. If, in contrast, $u(S_\E)\leq 1/2$ then $\E$ appears to be useless. We will largely answer this question, though not completely. The answer depends on the dimension $d$ and the value of $p$. We show, among other things, that for $d=2$ or $p$ close to 0 or 1, $u(S_\E)\leq 1/2$ for all $\E$. Furthermore, while there are operators $0\leq E\leq I$ with $u(S_E)>1/2$ in dimension $d\geq 3$, we present reasons to conjecture that always $u(S_E)\leq 1-1/e\approx0.632$. Since that value is not particularly close to 1, we conclude that, without prior information about $\psi$, collapses cannot be detected in a useful way.

The rest of the paper is organized as follows. In Sec.~\ref{sec:report}, we summarize the results of \cite{CT12a} about the case that $\psi$ is known or random with known distribution. In Sec.~\ref{sec:no-information-experimentation}, we present our new results. Long proofs and calculations relevant to Sec.~\ref{sec:no-information-experimentation} are mostly postponed to Sec.~\ref{sec:proofs}.

%%%%%%%%%%%%%%%%%%%%%%%%%%%%%%%%%%%%%%%%%%%%%%%%
%%%%%%%%%%%%%%%%%%%%%%%%%%%%%%%%%%%%%%%%%%%%%%%%

%%%%%%%%%%%%%%%%%%%%%%%%%%%%%%%%%%%%%%%%%%%%%%%%
%%%%%%%%%%%%%%%%%%%%%%%%%%%%%%%%%%%%%%%%%%%%%%%%

\section{If $\psi$ is Known or Random with Known Distribution}
\label{sec:report}

Assuming that the initial state $\psi$ is known, the reliability is found to be \cite{CT12a}
\begin{equation}\label{eqn:reliability-psi-formula}
R_{\psi, p}(\E) = p \braket{\psi | \diag E | \psi } + (1-p) \braket{\psi | I - E | \psi},
\end{equation}
where
\be\label{diagdef}
\diag E = \sum_{k = 1}^d \ket{b_k}\braket{b_k|E|b_k} \bra{b_k}
\ee
is the ``diagonal part'' of the operator $E$ relative to the basis $B$. In particular, the reliability depends on $\E$ only through $E$. A perhaps surprising result \cite{CT12a} is that, for $p \geq d/(d+1)$, no experiment is more reliable than blind guessing.

If $\psi$ is random with known distribution $\mu$ on the unit sphere $\SSS$ in Hilbert space $\Hilbert$, then the reliability $R_{\mu,p}(\E)$ of $\E$ is the average of $R_{\psi,p}(\E)$ over $\psi$, 
\begin{equation}\label{eqn:Rmu=ERpsi}
R_{\mu,p}(\E) = \int_{\SSS}\mu(d\psi)  \, R_{\psi,p}(\E) \,,
\end{equation}
and more explicitly given by
\begin{equation}\label{eqn:limited-information-reliability-eqn}
R_{\mu, p}(\E) = \tr \Bigl[ \rho \left( p\diag E + (1-p)(I-E) \right) \Bigr]\,,
\end{equation}
where $\rho$ is the density matrix associated with distribution $\mu$, defined by
\begin{equation}\label{eqn:rho-definition-eqn}
\rho = \int_{\SSS} \mu(d\psi) \, \pr{\psi}\,.
\end{equation}
In particular, the reliability depends on $\mu$ only through $\rho$ (and on $\E$ only through $E$).

The problem of detecting collapse can be paraphrased as the problem of distinguishing between the two density matrices $\rho_1=\rho$ (corresponding to the distribution $\mu$ of the pre-collapse state $\psi$) and $\rho_2=\diag \rho$ (which is the density matrix corresponding to the distribution of the post-collapse state). That is, we are given, with probability $p$, a system with density matrix $\rho_2$ or, with probability $1-p$, a system with density matrix $\rho_1$, and want to decide which of the two cases has occurred. For any $\{1,2\}$-valued experiment $\E$, its \emph{reliability} $R_{\rho_1,\rho_2,p}(\E)$ is defined to be the probability of giving the correct retrodiction. It is found to be, in terms of the POVM $\{E_1,E_2=I-E_1\}$ of $\E$,
\begin{equation}\label{RHelstrom}
R_{\rho_1, \rho_2, p}(\E) 
= 1-p + \tr \left[ A E_1 \right]
\end{equation}
with
\begin{equation}\label{A-def-general}
A = p \rho_1 - (1-p) \rho_2\,.
\end{equation}
The operator $E_1$ that maximizes the reliability was identified by Helstrom \cite{Hel76} as follows:

\begin{prop}\label{prop:rho1rho2}
For $0 \leq p \leq 1$ and any density matrices $\rho_1, \rho_2$,
\begin{equation}\label{general-optimality}
R^{\max}_{\rho_1,\rho_2,p}:=\max_{0\leq E_1\leq I} R_{\rho_1,\rho_2,p}(E_1) = (1-p) + \lambda^+ = p - \lambda^-,
\end{equation}
where $\lambda^+\geq 0$ and $\lambda^-\leq 0$ are, respectively, the sum of the positive eigenvalues (with multiplicities) and that of the negative eigenvalues of $A$ as in \eqref{A-def-general}.
The optimal operators $E_1=E_\mathrm{opt}$ for which this maximum is attained, $R^{\max}_{\rho_1, \rho_2,p} = R_{\rho_1, \rho_2, p}(E_\mathrm{opt})$, are those satisfying
\be\label{Eoptrho12}
P^+_A \leq E_\mathrm{opt} \leq P^+_A + P^0_A\,,
\ee
where $P^+_A$ is the projection onto the positive spectral subspace of $A$, i.e, onto the sum of all eigenspaces of $A$ with positive eigenvalues, and $P^0_A$ is the projection onto the kernel of $A$.
\end{prop}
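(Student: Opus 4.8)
The plan is to reduce the operator optimization to a scalar one via the spectral decomposition of $A$. By \eqref{RHelstrom} we have $R_{\rho_1,\rho_2,p}(E_1) = (1-p) + \tr[AE_1]$, so since $1-p$ is constant, maximizing reliability over $0 \leq E_1 \leq I$ is equivalent to maximizing $\tr[AE_1]$. First I would note that $A = p\rho_1 - (1-p)\rho_2$ is self-adjoint, so it admits a spectral decomposition $A = \sum_i \lambda_i \pr{a_i}$ with orthonormal eigenvectors $a_i$ and real eigenvalues $\lambda_i$, giving $\tr[AE_1] = \sum_i \lambda_i \braket{a_i|E_1|a_i}$.

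The key observation for the first equality is that the diagonal entries $e_i := \braket{a_i|E_1|a_i}$ satisfy $0 \leq e_i \leq 1$ for any $0 \leq E_1 \leq I$: positivity of $E_1$ gives $e_i \geq 0$, and positivity of $I-E_1$ gives $e_i \leq 1$. Hence $\tr[AE_1] = \sum_i \lambda_i e_i \leq \sum_{\lambda_i > 0}\lambda_i = \lambda^+$, and the bound is attained by choosing $e_i = 1$ when $\lambda_i > 0$ and $e_i = 0$ when $\lambda_i < 0$ (the value on the kernel being irrelevant). This yields $R^{\max} = (1-p) + \lambda^+$. The second equality then follows from $\tr[A] = p\tr[\rho_1] - (1-p)\tr[\rho_2] = 2p-1$, so that $\lambda^+ + \lambda^- = 2p-1$ and therefore $(1-p)+\lambda^+ = p - \lambda^-$.

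It remains to identify the maximizers, and this is where I expect the real work. The scalar argument shows only that optimality forces $e_i = 1$ on positive eigenvectors and $e_i = 0$ on negative ones; these \emph{diagonal} conditions must be upgraded to operator statements. The crucial tool is the elementary fact that if $0 \leq E_1 \leq I$ and $\braket{v|E_1|v} = 1$ for a unit vector $v$, then $E_1 v = v$, while $\braket{v|E_1|v} = 0$ forces $E_1 v = 0$. I would prove this by writing $I - E_1 = C^*C$ (respectively $E_1 = C^*C$) with $C$ the positive square root, so that the diagonal condition gives $\|Cv\|^2 = 0$, hence $Cv = 0$. Applying this to each positive eigenvector shows $E_1 P^+_A = P^+_A$ and, taking adjoints, $P^+_A E_1 = P^+_A$; applying it to each negative eigenvector shows $E_1 P^-_A = 0 = P^-_A E_1$. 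In the decomposition $\Hilbert = V^+ \oplus V^0 \oplus V^-$ this pins $E_1$ to block-diagonal form, acting as the identity on $V^+$, as zero on $V^-$, and as some $X$ with $0 \leq X \leq I$ on $V^0$, which is exactly the sandwiching condition $P^+_A \leq E_1 \leq P^+_A + P^0_A$.

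For the converse---that every such $E_1$ is optimal---I would test the operator inequalities against eigenvectors: for $v \in V^-$ the upper bound gives $\braket{v|E_1|v} \leq 0$, and for $v \in V^+$ the lower bound gives $\braket{v|E_1|v} \geq 1$, so by the scalar computation $\tr[AE_1] = \lambda^+$. The main obstacle is not any single difficult estimate but rather the care required to move cleanly, and in both directions, among the three equivalent descriptions of the maximizers: saturation of the diagonal bound, the explicit block-diagonal form, and the sandwiching inequality $P^+_A \leq E_{\mathrm{opt}} \leq P^+_A + P^0_A$.
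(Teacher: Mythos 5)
The paper does not actually prove Proposition~\ref{prop:rho1rho2}; it attributes the result to Helstrom \cite{Hel76} and states it without argument, so there is no internal proof to compare against. Your proposal is a correct and complete proof of the standard Helstrom bound, and it follows the usual route: reduce to maximizing $\tr[AE_1]$, diagonalize $A$, bound the diagonal entries $\braket{a_i|E_1|a_i}$ by $[0,1]$ to get $\lambda^+$, and use $\tr A = 2p-1$ for the second equality. The characterization of the maximizers is also handled correctly; the lemma that $\braket{v|E_1|v}=1$ forces $E_1v=v$ (via $I-E_1=C^*C$ and $\|Cv\|=0$) is exactly the right tool to upgrade the diagonal saturation conditions to the operator statement, the invariance of $V^+$, $V^-$, and hence $V^0$ under the self-adjoint $E_1$ gives the block form, and the converse check against eigenvectors closes the equivalence with the sandwich condition \eqref{Eoptrho12}. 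No gaps.
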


Let us return to the special case of detecing collapse ($\rho_1=\rho,\rho_2=\diag \rho$). A particular distribution $\mu$ of interest is the (normalized) uniform distribution $u$ over the unit sphere; for it, $\rho=d^{-1}I = \diag \rho$, and it is impossible to distinguish between the two density matrices. In fact, independently of $p$, no experiment can provide any additional information at all for $\mu=u$ about whether collapse has occurred. In this setting, blind guessing is  always optimal.

The other extreme special case is that of known $\psi$ ($\mu=\delta_\psi, \rho=\pr{\psi}$). We obtain from Prop.~\ref{prop:rho1rho2} that blind guessing is optimal for $p\geq d/(d+1)$, while for $p<d/(d+1)$, the optimal operator $E$ is the projection to the $(d-1)$-dimensional subspace of $\Hilbert$ orthogonal to the unique (up to a factor) eigenvector $\phi$ of $A$ as in \eqref{A-def-general} with a negative eigenvalue; $\phi$ is computed explicitly in \cite{CT12a}, along with the maximal value of reliability.

Instead of the exact value of $R^{\max}$, it is often more practical to compute bounds on $R^{\max}$. We provide several such bounds in \cite{CT12a}, the simplest of which is
\be
R^{\max}_{\rho,p} \leq \max\Bigl(p,1-\frac{p}{d}\Bigr)\,.
\ee
In fact, for $p\geq d/(d+1)$, $\max(p,1-p/d) = p$, which is the reliability of blind guessing.

%%%%%%%%%%%%%%%%%%%%%%%%%%%%%%%%%%%%%%%%%%%%%%%%%
%%%%%%%%%%%%%%%%%%%%%%%%%%%%%%%%%%%%%%%%%%%%%%%%

\section{No Prior Information About $\psi$}\label{sec:no-information-experimentation}

For a particular experiment $\E$, for what fraction of $\SSS$ does $\E$ perform better than blind guessing? That is, how big is the set $S_\E$ where $R_\psi(\E)$ surpasses the reliability $R(\E_\emptyset)$ of blind guessing? The results reported in the previous section imply that the average of $R_\psi(\E)$ over $\psi$ (uniformly distributed) is at most $R(\E_\emptyset)$. As such, values above $R(\E_\emptyset)$ in $S_\E$ come at the price of values below $R(\E_\emptyset)$, outside $S_\E$; but that might be acceptable if $S_\E$ is very large.

Let $\mu$ be the normalized uniform measure on $\SSS$ (called $u$ so far). Define $\Lambda_p(E)$ to be the measure of the set of $\psi$ for which $E$ is more reliable than blind guessing. That is, with $E_\emptyset$ the operator associated with blind guessing (i.e., $E_\emptyset=0$ for $p\leq 1/2$ and $E_\emptyset=I$ for $p> 1/2$),
\begin{equation}\label{eqn:lambda-def}
\Lambda_p(E) = \mu\Bigl\{ \psi \in \SSS : R_{\psi, p}(E) > R_{\psi, p}(E_\emptyset) \Bigr\}.
\end{equation}
This definition is equivalent to $\Lambda_p(E) = \PPP\bigl( R_{\psi, p}(E) > R_{\psi, p}(E_\emptyset) \bigr)$ when $\psi$ is sampled from $\SSS$ with the uniform distribution. This probabilistic interpretation of $\Lambda_p(E)$ will let us calculate many things explicitly.

For instance, we offer the following bound, which we will proceed to tighten.

\begin{prop}\label{prop:markov-bound}
For $p \neq 1/2$ and any $0\leq E\leq I$, $\Lambda_p(E) < 1$. That is, the set where $E$ is more reliable than blind guessing is strictly smaller than all $\SSS$.
\end{prop}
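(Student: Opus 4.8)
The plan is to use a first-moment (averaging) argument: I will show that the expected reliability gain $R_{\psi,p}(E)-R_{\psi,p}(E_\emptyset)$, taken over $\psi$ distributed uniformly on $\SSS$, is $\leq 0$, and strictly negative whenever $E$ differs from the blind-guessing operator $E_\emptyset$. A function with strictly negative mean must be negative on a set of positive measure, and that forces $\Lambda_p(E)<1$. First I would record the reliability of blind guessing: inserting $E_\emptyset=0$ (for $p\leq 1/2$) and $E_\emptyset=I$ (for $p>1/2$) into \eqref{eqn:reliability-psi-formula}, and using $\diag I=I$, gives $R_{\psi,p}(E_\emptyset)=\max(p,1-p)$, independent of $\psi$. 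Writing $G=p\diag E-(1-p)E$, formula \eqref{eqn:reliability-psi-formula} reads $R_{\psi,p}(E)=(1-p)+\braket{\psi|G|\psi}$, so the gain is
\[
f(\psi):=R_{\psi,p}(E)-R_{\psi,p}(E_\emptyset)=\braket{\psi|G|\psi}-\max(2p-1,0).
\]

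Next I would integrate $f$ against $\mu$. Since $\mu$ is uniform its density matrix is $\rho=d^{-1}I$ (as noted in Sec.~\ref{sec:report}), so by \eqref{eqn:rho-definition-eqn} we have $\int_\SSS \braket{\psi|G|\psi}\,\mu(d\psi)=\tr(\rho G)=d^{-1}\tr G$. Because $\tr(\diag E)=\tr E$, it follows that $\tr G=(2p-1)\tr E$, whence
\[
\int_\SSS f\,d\mu=\frac{(2p-1)\tr E}{d}-\max(2p-1,0).
\]
For $p<1/2$ this equals $\tfrac{(2p-1)\tr E}{d}\leq 0$ because $\tr E\geq 0$; for $p>1/2$ it equals $(2p-1)\bigl(\tfrac{\tr E}{d}-1\bigr)\leq 0$ because $\tr E\leq \tr I=d$. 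In both cases the mean gain is $\leq 0$. This is exactly where the hypothesis $p\neq 1/2$ enters: for $p=1/2$ the coefficient $2p-1$ vanishes and the mean gain is identically $0$, so the argument degenerates.

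Finally I would isolate the equality case. The two inequalities above are strict unless $\tr E=0$, forcing $E=0=E_\emptyset$ (when $p<1/2$), or $\tr E=d$, forcing $I-E=0$, i.e. $E=I=E_\emptyset$ (when $p>1/2$); in either degenerate case $f\equiv 0$ and hence $\Lambda_p(E)=0<1$ trivially. Otherwise $\int_\SSS f\,d\mu<0$, so $f$ cannot be nonnegative $\mu$-almost everywhere; thus $\mu\{f<0\}>0$ and therefore
\[
\Lambda_p(E)=\mu\{f>0\}\leq 1-\mu\{f<0\}<1.
\]

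The argument is short, and the only genuine point to get right—the main ``obstacle''—is the \emph{strictness}. One must separate the trivial case $E=E_\emptyset$ from the generic case, and verify that $p\neq 1/2$ together with $0\leq E\leq I$ (so that $0\leq\tr E\leq d$) makes the mean gain strictly negative in the latter; this is what upgrades the conclusion from $\Lambda_p(E)\leq 1$ to $\Lambda_p(E)<1$.
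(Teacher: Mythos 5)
Your proof is correct and takes essentially the same approach as the paper: both rest on the first-moment computation $\int_\SSS R_{\psi,p}(E)\,\mu(d\psi)=(1-p)+(2p-1)\tr E/d\leq\max(p,1-p)$, the paper packaging it via Markov's inequality while you phrase it as ``the mean gain is negative, so the gain is negative on a set of positive measure.'' Your explicit treatment of the degenerate cases $E=0$ and $E=I$ is slightly more careful than the paper's passing remark about ``non-trivial $E$,'' but the substance is identical.
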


\begin{proof}
Taking the probabilistic view as mentioned, and applying Markov's inequality,
\begin{equation}
\begin{split}
\Lambda_p(E) & = \PPP\Bigl( R_{\psi, p}(E) > R_{\psi, p}(E_\emptyset) \Bigr) \\
& = \PPP\Bigl( R_{\psi, p}(E) > \max(p,1-p) \Bigr) \\
& \leq \frac{ \EEE [R_{\psi, p}(E)] }{  \max(p,1-p) },
\end{split}
\end{equation}
with the expectation taken with respect to the uniform distribution over $\SSS$. Since this expectation is $(1-p) + (2p-1) \tr E/d$,
\begin{equation}\label{eqn:markov-bound}
\begin{split}
\Lambda_p(E) & \leq \frac{ (1-p) + (2p-1) \tr E / d }{  \max(p, 1-p) } \\
& = \frac{ (1-p) + (2p-1) \tr E / d }{  (1-p) + \max(0, 2p-1) }.
\end{split}
\end{equation}
For non-trivial $E$, and $p \neq 1/2$, the above gives $\Lambda_p(E) < 1$.
\end{proof}
For any particular $E$, \eqref{eqn:markov-bound} can give a much better bound, but this serves to show that for any non-trivial $E$, $p \neq 1/2$, there is a set of $\psi$ of positive measure for which $E$ is no more reliable than blind guessing. For $p = 1/2$, \eqref{eqn:markov-bound} gives the unenlightening upper bound of $1$ independent of $E$, but we will explore this in more depth shortly.

Note that $R_{\psi, p}(E)$ may be expressed equivalently as
\begin{equation}
\begin{split}
R_{\psi, p}(E) & = \begin{cases}
(1-p) + \braket{\psi| p \diag E - (1-p) E |\psi} & \text{if } p \leq 1/2 \\
p + \braket{\psi| (1-p)(I-E) - p \diag(I - E)| \psi} & \text{if } p > 1/2\\
\end{cases} \\
& = \max(p, 1-p) + \braket{\psi|A_p(E)|\psi},
\end{split}
\end{equation}
where $A_p(E)$ is defined by
\begin{equation}
\label{eqn:ApE-definition}
A_p(E) = \begin{cases}
p \diag E - (1-p) E & \text{if } p \leq 1/2 \\
(1-p)(I - E) - p \diag(I - E) & \text{if } p > 1/2. \\
\end{cases}
\end{equation}
Hence, we may equivalently express the condition that $R_{\psi, p}(E) > R_{\psi, p}(E_\emptyset)$ as
\be\label{psiApsi0}
\braket{\psi|A_p(E)|\psi} > 0\,.
\ee
That is, $E$ is more reliable than blind guessing for a given $\psi$ iff \eqref{psiApsi0} holds. As a function of $\psi$, $\braket{\psi|A_p(E)|\psi}$ is maximized and minimized at the eigenvectors corresponding to the largest and smallest eigenvalues, respectively. If the eigenvalues of $A_p(E)$ are all at most $0$, then $\braket{\psi|A_p(E)|\psi} > 0$ is never satisfied, and $\Lambda_p(E) = 0$.

In fact, we may express $\Lambda_p(E)$ precisely in terms of the eigenvalues of $A_p(E)$, via the following theorem.

\begin{theorem}\label{thm:measure-formula}
Consider a self-adjoint $d\times d$ matrix $A$ with non-degenerate positive eigenvalues given by $\alpha_1 > \alpha_2 > \ldots > \alpha_k > 0$, and non-positive eigenvalues given by $0 \geq \beta_1 \geq \beta_2 \geq ... \geq \beta_m$, with $k + m = d$ and $k,m\geq 1$. Then
\begin{equation}
\label{eqn:set-measure-formula}
\mu \bigl\{ \psi \in \SSS : \braket{\psi|A|\psi} > 0 \bigr\} = \sum_{i = 1}^k \frac{\alpha_i^{d-1}}{\prod_{h = 1}^m(\alpha_i - \beta_h) \prod_{j = 1, j \neq i}^k(\alpha_i - \alpha_j)},
\end{equation}
where $\mu$ is the normalized, uniform measure on $\SSS$.
\end{theorem}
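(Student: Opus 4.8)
The plan is to reduce the geometric question on the sphere to an elementary probabilistic one about a signed sum of exponential random variables, and then to evaluate that probability by contour integration of a Laplace transform, reading off the stated sum of residues.

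First I would diagonalize. Since $A$ is self-adjoint, fix an orthonormal eigenbasis $e_1,\ldots,e_d$ with $Ae_j=\lambda_j e_j$, where $(\lambda_j)$ lists $\alpha_1,\ldots,\alpha_k$ together with $\beta_1,\ldots,\beta_m$; writing $\psi=\sum_j c_j e_j$ gives $\braket{\psi|A|\psi}=\sum_j \lambda_j|c_j|^2$. The structural key is that for $\psi$ uniform on $\SSS$ (the unit sphere of $\CCC^d\cong S^{2d-1}$), the squared amplitudes $(|c_1|^2,\ldots,|c_d|^2)$ are uniform on the standard simplex. I would prove this via a Gaussian representation: for $g\sim N(0,I_{2d})$ on $\RRR^{2d}\cong\CCC^d$, the normalized vector $g/\|g\|$ is uniform on $\SSS$ by rotational invariance, and $|c_j|^2=Z_j/\sum_l Z_l$ with $Z_j=|g_j|^2$ independent and identically distributed (chi-squared with two degrees of freedom, hence exponential up to an irrelevant common scale). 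Because the denominator $\sum_l Z_l$ is strictly positive, $\braket{\psi|A|\psi}>0$ if and only if $\sum_j\lambda_j Z_j>0$. This reduces the claim to showing that $\PPP\bigl(\sum_j\lambda_j Z_j>0\bigr)$ equals the right-hand side of \eqref{eqn:set-measure-formula}, for $Z_j$ independent and exponentially distributed.

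Setting $S=\sum_j\lambda_j Z_j$, I would use its bilateral Laplace transform $M(s)=\EEE[e^{-sS}]=\prod_j(1+s\lambda_j)^{-1}$, which converges on the strip $-1/\alpha_1<\mathrm{Re}(s)<1/|\beta_m|$. Choosing real $c$ with $-1/\alpha_1<c<0$, inverting the transform, integrating the density over $(0,\infty)$, and interchanging integrals yields $\PPP(S>0)=-\tfrac{1}{2\pi i}\int_{c-i\infty}^{c+i\infty}\tfrac{M(s)}{s}\,ds$, using $\int_0^\infty e^{sx}\,dx=-1/s$ for $\mathrm{Re}(s)<0$. The integrand $M(s)/s$ has simple poles at $s=0$ and $s=1/|\beta_h|>0$ (right of the contour) and at $s=-1/\alpha_i<c$ (left of it), and decays like $|s|^{-(d+1)}$. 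Closing to the left captures exactly the poles $s=-1/\alpha_i$; using $1+s\alpha_i=\alpha_i(s+1/\alpha_i)$ and $\prod_{j\neq i}(1+s\lambda_j)\big|_{s=-1/\alpha_i}=\alpha_i^{-(d-1)}\prod_{j\neq i}(\alpha_i-\lambda_j)$, the residue at $s=-1/\alpha_i$ is $-\alpha_i^{d-1}/\bigl[\prod_h(\alpha_i-\beta_h)\prod_{j\neq i}(\alpha_i-\alpha_j)\bigr]$. Summing these and flipping the overall sign reproduces \eqref{eqn:set-measure-formula} precisely.

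The main obstacles are analytic bookkeeping rather than conceptual: I must justify the Fubini interchange and the vanishing of the semicircular arc (both from the $|s|^{-(d+1)}$ decay with $d\geq 2$) and track orientation so that closing to the left gives a positively oriented contour around $\{-1/\alpha_i\}$. A secondary point is the boundary case $\beta_m=0$ (or several vanishing $\beta_h$): the upper edge of the strip then degenerates, but vanishing eigenvalues simply drop out of $M(s)$ while still contributing factors $\alpha_i$ in the formula, so the identity persists, as one checks directly or by continuity in the $\beta_h$. I would sanity-check the output in the case $d=2$, $k=m=1$, where $\PPP(\alpha_1 Z_1>|\beta_1|Z_2)=\alpha_1/(\alpha_1+|\beta_1|)$ agrees with the formula, and in the all-$\beta_h=0$ case, where the partial-fraction identity $\sum_i\alpha_i^{k-1}/\prod_{j\neq i}(\alpha_i-\alpha_j)=1$ collapses the expression to the correct value $\PPP(S>0)=1$.
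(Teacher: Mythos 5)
Your proposal is correct. Its first half coincides exactly with the paper's: both diagonalize $A$, represent a uniform $\psi$ as a normalized complex Gaussian, observe that the positive denominator $\|\phi\|^2$ can be dropped, and thereby reduce the claim to computing $\PPP\bigl(\sum_j \lambda_j Z_j>0\bigr)$ for independent exponential $Z_j$. Where you genuinely diverge is in evaluating that probability. The paper derives the density of a weighted sum of exponentials explicitly via a convolution recursion, which forces an assumption that the $\beta_j$ be mutually distinct, then integrates, and needs the Lagrange interpolation identity twice --- once to verify the closed form of the density and once to collapse the resulting double sum over $(i,h)$ into the single sum of \eqref{eqn:set-measure-formula} --- finishing with a continuity argument to cover coincident $\beta_j$. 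Your route inverts the bilateral Laplace transform $\prod_j(1+s\lambda_j)^{-1}$ and closes the contour to the left, picking up only the simple poles at $s=-1/\alpha_i$; I checked the residue computation and the sign bookkeeping, and they reproduce \eqref{eqn:set-measure-formula} exactly. This buys you two things: degenerate or coincident $\beta_j$ cause no trouble, since the corresponding poles sit to the right of the contour and are never encircled (only the non-degeneracy of the $\alpha_i$, which the theorem assumes, is needed for simple poles), and the double-sum-to-single-sum simplification is bypassed entirely. The price is the analytic overhead you already flag (Fubini, arc decay, the $\beta_h=0$ boundary case), all of which are routine given the $|s|^{-(d+1)}$ decay and $k\geq 1$. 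Your sanity checks ($d=2$ with $k=m=1$, and the all-$\beta_h=0$ case via the leading coefficient in Lagrange interpolation) are both correct.
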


The proof is given in Sec.~\ref{sec:proofs:lambda}.

Thm.~\ref{thm:measure-formula}  gives us an explicit formula for $\Lambda_p(E)$, taking $A = A_p(E)$. 
We note for later use that
\begin{equation}\label{eqn:trace-equation}
\tr A_p(E) = \begin{cases}
 -(1-2p)\tr E & \text{if } p \leq 1/2 \\
 -(2p-1) \tr \bigl[ I - E \bigr] & \text{if } p > 1/2.
\end{cases}
\end{equation}
In particular, always $\tr A_p(E)\leq 0$.

Finding $E$ to maximize $\Lambda_p(E)$ can be treated as a problem of maximizing \eqref{eqn:set-measure-formula} as a function of the $\alpha_i, \beta_j$. However, it is not obvious which sets of $\alpha_i, \beta_j$ are realizable as eigenvalues of $A_p(E)$ for $0\leq E\leq I$ and $0 < p < 1$. Some results characterizing the spectrum of $A_p(E)$ are given in Sec.~\ref{sec:proofs:spectrum}. A weaker but still interesting problem would be to maximize \eqref{eqn:set-measure-formula} given the trace constraint that
\begin{equation}
 \sum_{i = 1}^k \alpha_i + \sum_{j = 1}^m \beta_j = \tr A_p(E)\,.
\end{equation}
This would give an upper bound for $\Lambda_p(E)$. However, in general it seems difficult to analytically find a global maximum of the function given the constraints.

The case in two dimensions is fairly immediate, however.

\begin{theorem}
\label{thm:2d-guessing-set}
For $d = 2$, any operator $0\leq E\leq I$, and any $0 < p < 1$, $\Lambda_p(E) \leq 1/2$.
\end{theorem}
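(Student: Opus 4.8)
The plan is to combine the explicit formula of Theorem~\ref{thm:measure-formula} with the sign information $\tr A_p(E)\le 0$ recorded in \eqref{eqn:trace-equation}. Write $A=A_p(E)$, a self-adjoint $2\times 2$ matrix, and recall from \eqref{psiApsi0} that $\Lambda_p(E)=\mu\{\psi\in\SSS:\braket{\psi|A|\psi}>0\}$. The whole argument then hinges on classifying the possible eigenvalue configurations of a $2\times 2$ self-adjoint matrix subject to the constraint $\tr A\le 0$.

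First I would dispose of the degenerate cases. If $A$ has no positive eigenvalue, then $\braket{\psi|A|\psi}\le 0$ for every $\psi$, so the set in question is null and $\Lambda_p(E)=0\le 1/2$. The case of two positive eigenvalues cannot occur, since it would force $\tr A>0$, contradicting \eqref{eqn:trace-equation}; likewise, one positive eigenvalue together with a zero eigenvalue is impossible for the same reason. Hence the only remaining configuration is exactly one positive eigenvalue $\alpha_1>0$ and one strictly negative eigenvalue $\beta_1<0$. In that case $\alpha_1$ and $\beta_1$ are automatically distinct, so the hypotheses of Theorem~\ref{thm:measure-formula} (non-degenerate positive eigenvalue, $k=m=1$) are met.

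In this remaining case I would simply evaluate \eqref{eqn:set-measure-formula} with $d=2$ and $k=m=1$: the product over $j\neq i$ is empty, leaving
\begin{equation}
\Lambda_p(E)=\frac{\alpha_1^{\,d-1}}{\alpha_1-\beta_1}=\frac{\alpha_1}{\alpha_1-\beta_1}\,.
\end{equation}
Since $\tr A=\alpha_1+\beta_1\le 0$ gives $\beta_1\le-\alpha_1$, we get $\alpha_1-\beta_1\ge 2\alpha_1>0$, and therefore $\Lambda_p(E)=\alpha_1/(\alpha_1-\beta_1)\le 1/2$, which is the claimed bound.

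There is essentially no hard analytic step here; the only thing requiring care is the bookkeeping of the edge cases, to ensure that whenever a positive eigenvalue is present the second eigenvalue is strictly negative (so that Theorem~\ref{thm:measure-formula} genuinely applies and the denominator does not collapse), and that the configurations excluded by $\tr A\le 0$ are indeed impossible. The trace identity \eqref{eqn:trace-equation} does all the real work, converting the statement into the one-line equivalence $\alpha_1/(\alpha_1-\beta_1)\le 1/2 \Leftrightarrow \alpha_1+\beta_1\le 0$.
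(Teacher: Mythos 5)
Your proof is correct and follows essentially the same route as the paper: both arguments reduce to the two-eigenvalue formula $\Lambda_p(E)=\alpha/(\alpha-\beta)$ from Theorem~\ref{thm:measure-formula} and then invoke $\tr A_p(E)\leq 0$ to conclude $\Lambda_p(E)\leq 1/2$. Your explicit exclusion of the positive-plus-zero eigenvalue configuration is a minor bit of extra care that the paper leaves implicit.
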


\begin{proof}
At least one eigenvalue of $A_p(E)$ must be non-positive because $\tr A_p(E)\leq 0$. 
If both eigenvalues of $A_p(E)$ are non-positive, $\Lambda_p(E) = 0$. If, however, one is positive, let $\alpha > 0 > \beta$ be the eigenvalues of $A_p(E)$. Then from Eq.~\eqref{eqn:set-measure-formula}, 
\begin{equation}
\Lambda_p(E) =  \frac{\alpha}{\alpha - \beta}\,.
\end{equation}
Noting that $\alpha + \beta = \tr A_p(E)$, $-\beta = \alpha - \tr A_p(E)$. Hence, $\Lambda_p(E) = \frac{\alpha}{2\alpha - \tr A_p(E)}$. Since $\tr A_p(E)\leq 0$, $\Lambda_p(E) \leq 1/2$.
\end{proof}

Hence, in two dimensions, the set of $\psi$ for which any non-trivial experiment is more reliable than blind guessing is no greater than half the sphere. It would seem, therefore, that we have nothing to lose by blind guessing. In higher dimensions, the situation is somewhat improved. Some limits remain, however.

\begin{theorem}
\label{thm:chernoff-theorem}
For all operators $0\leq E\leq I$ and all $0\leq p\leq 1$,
\be\label{chernoff-bound}
\Lambda_p(E) \leq 4 p (1-p)\,.
\ee
\end{theorem}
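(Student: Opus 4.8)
The plan is to control $\Lambda_p(E)$ by a Chernoff-type (exponential Markov) estimate applied to the quadratic form $\braket{\psi|A_p(E)|\psi}$, to collapse the resulting eigenvalue product down to a two-parameter quantity by a Weierstrass inequality, and then to feed in a spectral estimate for $A_p(E)$ that ties its positive and negative parts to $\tr E$.

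First I would put the problem in Gaussian form. Write $A=A_p(E)$ as in \eqref{eqn:ApE-definition}, and sample a uniform $\psi$ as $\psi=G/\|G\|$ with $G$ a standard complex Gaussian vector on $\Hilbert$. Since $\|G\|^2>0$, the event $\braket{\psi|A|\psi}>0$ coincides with $\braket{G|A|G}>0$, so by \eqref{psiApsi0} we have $\Lambda_p(E)=\PPP(\braket{G|A|G}>0)$. Diagonalizing $A$ with eigenvalues $\lambda_1,\dots,\lambda_d$ and using unitary invariance of $G$, we get $\braket{G|A|G}=\sum_i\lambda_i X_i$ with $X_i$ i.i.d.\ $\mathrm{Exp}(1)$. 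The exponential Markov inequality then gives, for every $t$ with $0<t<1/\max_i\lambda_i$,
\begin{equation}
\Lambda_p(E)\le\EEE\bigl[e^{t\braket{G|A|G}}\bigr]=\det(I-tA)^{-1}=\prod_{i=1}^d\frac{1}{1-t\lambda_i}\,.
\end{equation}

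Next I would reduce this product to two numbers. Let $\lambda^+\ge0$ be the sum of the positive eigenvalues of $A$ and $\lambda^-\le0$ the sum of the negative ones, as in Prop.~\ref{prop:rho1rho2}. The Weierstrass product inequality, applied separately to the positive and negative factors, gives $\prod_i(1-t\lambda_i)\ge(1-t\lambda^+)(1-t\lambda^-)$ for $0<t<1/\lambda^+$ (note $\lambda^+\ge\max_i\lambda_i$, so the admissible range is nonempty). Hence, optimizing the elementary two-factor bound over $t$ with $\alpha=\lambda^+$ and $\beta=-\lambda^-$,
\begin{equation}
\Lambda_p(E)\le\min_{t}\frac{1}{(1-t\alpha)(1+t\beta)}=\frac{4\alpha\beta}{(\alpha+\beta)^2}\,,
\end{equation}
where the minimizer is $t=(\beta-\alpha)/(2\alpha\beta)$ and one uses $4\alpha\beta+(\beta-\alpha)^2=(\alpha+\beta)^2$. (The trivial cases $\lambda^+=0$ and $p\in\{0,\tfrac12,1\}$ are handled directly, since then either $\Lambda_p(E)=0$ or $4p(1-p)\ge1$.)

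The final input is a spectral estimate. For $p\le 1/2$, where $A=p\diag E-(1-p)E$, the triangle inequality for the trace norm gives $\|A\|_1\le p\tr(\diag E)+(1-p)\tr E=\tr E$ (using $\diag E,E\ge0$ and $\tr\diag E=\tr E$), while \eqref{eqn:trace-equation} gives $\tr A=\lambda^++\lambda^-=-(1-2p)\tr E$. Writing $\|A\|_1=\lambda^+-\lambda^-=\alpha+\beta$ and combining these two relations yields $\alpha/p\le\beta/(1-p)$, i.e.\ $r:=\beta/\alpha\ge\max(p,1-p)/\min(p,1-p)$; the case $p>1/2$ is identical with $E$ replaced by $I-E$. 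Since $\phi(r)=4r/(1+r)^2$ is decreasing for $r\ge1$ and $\phi\bigl(\max(p,1-p)/\min(p,1-p)\bigr)=4p(1-p)$, we conclude $\Lambda_p(E)\le\phi(r)\le 4p(1-p)$. I expect the spectral estimate to be the crux: a naive relaxation using only the eigenvalue range and $\tr A\le0$ is \emph{insufficient} (it is in fact false, as configurations like $(\,p,-p/(d-1),\dots\,)$ would violate $4p(1-p)$ while satisfying those weaker constraints), so the trace-norm bound $\|A\|_1\le\tr E$ linking $\lambda^\pm$ to $\tr E$ is exactly the structural fact that makes the argument go through, and getting it in the scale-invariant form $r\ge\max/\min$ is the step requiring the most care.
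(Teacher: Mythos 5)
Your proof is correct, and its main engine is the same as the paper's: the Gaussian/exponential representation of a uniform $\psi$, the exponential Markov (Chernoff) bound giving $\prod_i(1-t\lambda_i)^{-1}$, the Weierstrass-type product inequalities collapsing this to $\bigl[(1-t\alpha)(1+t\beta)\bigr]^{-1}$ with $\alpha=\lambda^+$, $\beta=-\lambda^-$, and optimization over $t$ yielding $4\alpha\beta/(\alpha+\beta)^2$. (The $\mathrm{Exp}(1)$ vs.\ $\mathrm{Exp}(1/2)$ normalization discrepancy is immaterial, as the optimization over $t$ absorbs it.) Where you genuinely diverge is in the final spectral input: the paper obtains $\alpha\leq p\tr E$ and $\beta\leq(1-p)\tr E$ from Thm.~\ref{eigenvaluesums}, which rests on the Ky Fan eigenvalue inequality together with the Schur--Horn theorem (the content of Sec.~\ref{sec:proofs:spectrum}), and then argues that $4\beta(\beta+\tr A)/(2\beta+\tr A)^2$ is increasing in $\beta$. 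You instead get the equivalent constraint from the triangle inequality for the trace norm, $\|A_p(E)\|_1\leq p\|\diag E\|_1+(1-p)\|E\|_1=\tr E$, combined with the trace identity \eqref{eqn:trace-equation}, and then package it as the scale-invariant ratio bound $\beta/\alpha\geq\max(p,1-p)/\min(p,1-p)$ fed into the decreasing function $4r/(1+r)^2$. Your route is more elementary and self-contained---it bypasses Ky Fan and Schur--Horn entirely for this theorem---while the paper's machinery yields the finer eigenvalue-by-eigenvalue information of Sec.~\ref{sec:proofs:spectrum} that is reused elsewhere (e.g., in Cor.~\ref{late-night-corollary} and the proof of Thm.~\ref{thm:one-negative-eigenvalue}). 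Your closing remark that the bound cannot follow from $\tr A\leq 0$ and the eigenvalue range alone correctly identifies why some such structural input linking $\lambda^\pm$ to $\tr E$ is indispensable.
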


The proof is given in Sec.~\ref{sec:proofs:lambda}. Interestingly, this bound is independent of $d$. This result also agrees with \eqref{eqn:markov-bound} and a rough intuition of how the reliability ought to behave: At the extremes, $p = 0$ or $p = 1$, where collapse is either forbidden or guaranteed, any experiment other than blind guessing introduces the possibility of an incorrect outcome, hence no experiment can do better than blind guessing anywhere on the sphere. The upper bounds provided here and by \eqref{eqn:markov-bound} both increase as $p$ approaches $1/2$, peaking at $1$ with $p = 1/2$. Of course, $\Lambda_p(E)$ is always at most 1, so the case of $p =1/2$ is of little interest here. In general, the bound \eqref{chernoff-bound} is quite weak and deserves improvement, but one relevant consequence is the following. 

\begin{corollary}\label{cor:sqrt8}
Immediately from Thm.~\ref{thm:chernoff-theorem} (see Fig.~\ref{fig3}), independently of $d$, if $p < 1/2 - 1/\sqrt{8} \approx 0.146$ or $p > 1/2 + 1/\sqrt{8} \approx 0.854$, no experiment is more reliable than blind guessing for more than half of $\SSS$.
\end{corollary}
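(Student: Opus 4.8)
The plan is to derive the corollary directly from the Chernoff-type bound of Theorem~\ref{thm:chernoff-theorem} by solving a single quadratic inequality in $p$. Since that theorem gives $\Lambda_p(E)\leq 4p(1-p)$ for every $0\leq E\leq I$ and every $d$, it suffices to determine the range of $p$ on which $4p(1-p)\leq 1/2$; once that is established, $\Lambda_p(E)\leq 1/2$ follows for all $E$ and all $d$ simultaneously, which is exactly the assertion that no experiment beats blind guessing on more than half of $\SSS$.

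First I would rewrite the target inequality $4p(1-p)\leq 1/2$ in the equivalent form $8p^2-8p+1\geq 0$. Applying the quadratic formula to $8p^2-8p+1$ yields the two roots $p=\tfrac12\pm\tfrac{1}{\sqrt8}$, using that the discriminant is $\sqrt{8^2-4\cdot 8}=\sqrt{32}=4\sqrt2$ and that $\tfrac{4\sqrt2}{16}=\tfrac{1}{\sqrt8}$. Because the leading coefficient $8$ is positive, the parabola $8p^2-8p+1$ is nonnegative precisely outside the open interval between its roots, i.e.\ for $p\leq\tfrac12-\tfrac1{\sqrt8}$ and for $p\geq\tfrac12+\tfrac1{\sqrt8}$. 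On these ranges we therefore have $4p(1-p)\leq 1/2$, and hence $\Lambda_p(E)\leq 1/2$ by Theorem~\ref{thm:chernoff-theorem}, which is the claimed conclusion.

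I do not anticipate any genuine obstacle: the entire content is the elementary determination of where the upper bound $4p(1-p)$ descends to $1/2$, which is exactly the crossing depicted in Figure~\ref{fig3}. The only point worth a brief remark is that on the \emph{open} ranges $p<\tfrac12-\tfrac1{\sqrt8}$ and $p>\tfrac12+\tfrac1{\sqrt8}$ stated in the corollary the inequality $4p(1-p)<1/2$ is in fact strict, so blind guessing is at least as reliable as any experiment on strictly more than half of $\SSS$; for the stated conclusion, however, the non-strict bound $\Lambda_p(E)\leq 1/2$ already suffices.
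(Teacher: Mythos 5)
Your proposal is correct and matches the paper's (implicit) argument exactly: the paper treats the corollary as immediate from Theorem~\ref{thm:chernoff-theorem} together with the observation, illustrated in Fig.~\ref{fig3}, that $4p(1-p)\leq 1/2$ precisely for $p$ outside the interval $(1/2-1/\sqrt{8},\,1/2+1/\sqrt{8})$. Your quadratic computation of the crossing points is the full content of that observation, so nothing is missing.
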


\begin{figure}[h]
\begin{center}
\includegraphics[width=0.33\textwidth]{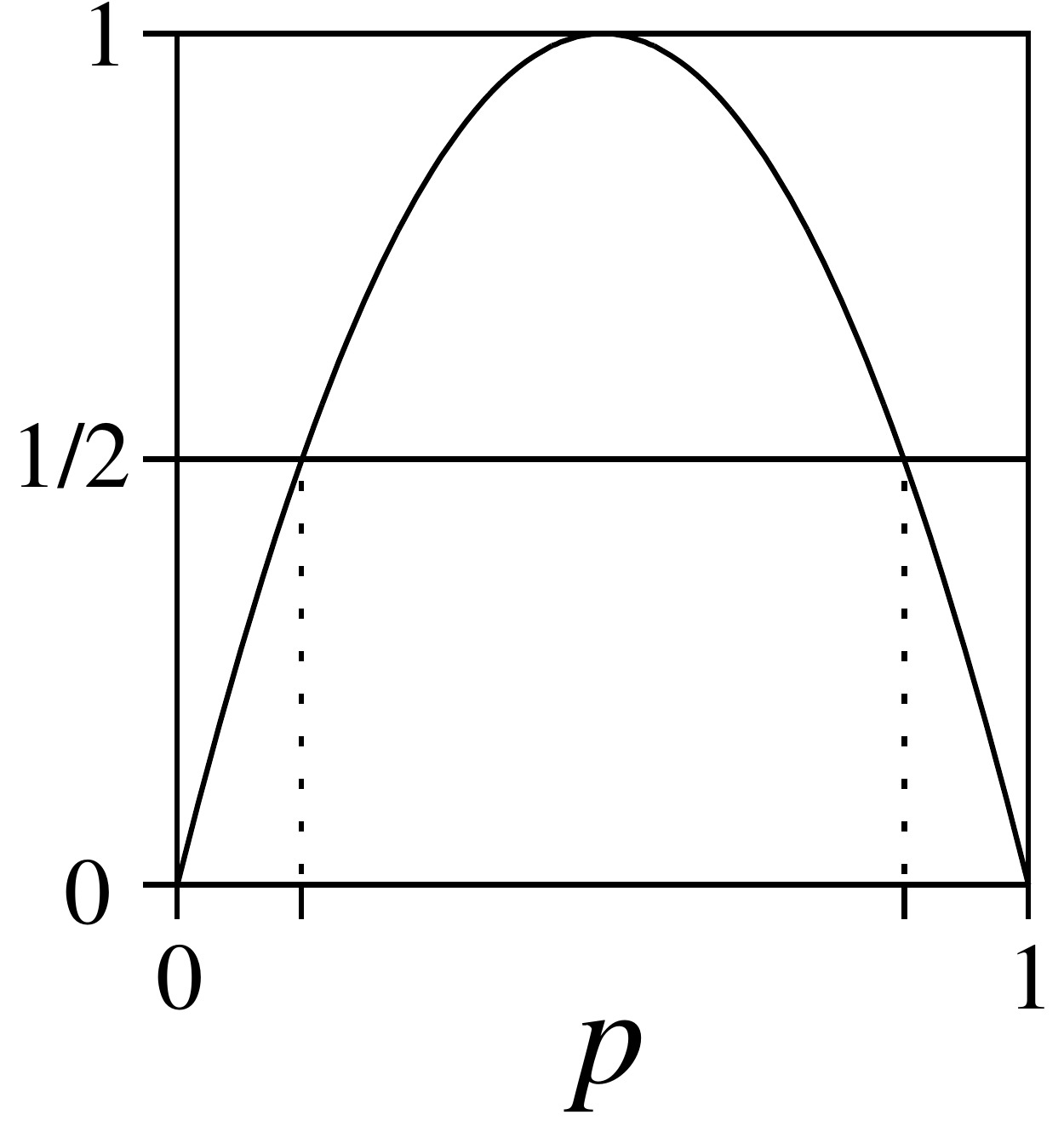}
\end{center}
\caption{
Graph of the function $f(p)=4p(1-p)$, relevant to Thm.~\ref{thm:chernoff-theorem} and Cor.~\ref{cor:sqrt8}. The marked points on the $p$-axis are at $1/2\pm1/\sqrt{8}$, which is where $f(p)=1/2$.}
\label{fig3}
\end{figure}

More can be said under assumptions about the distribution of eigenvalues of $A_p(E)$. Noting that $\tr A_p(E) < 0$ for $p\neq 1/2$ and $0\neq E\neq I$, $A_p(E)$ must have at least one negative eigenvalue. As noted, if $A_p(E)$ has all non-positive eigenvalues, $\Lambda_p(E) = 0$. We offer, therefore, the following extreme cases: that $A_p(E)$ has a single positive eigenvalue, and that $A_p(E)$ has a single negative eigenvalue. In either case, \eqref{eqn:set-measure-formula} is particularly easy to control.

\begin{theorem}
\label{thm:one-positive-eigenvalue}
For any $0 \leq E \leq I$ and $0<p<1$ for which $A_p(E)$ has only one positive eigenvalue, $\Lambda_p(E) \leq 1/2$.
\end{theorem}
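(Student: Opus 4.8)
The plan is to specialize the measure formula of Thm.~\ref{thm:measure-formula} to the case of a single positive eigenvalue and then reduce the desired inequality to an elementary minimization. Since $A_p(E)$ has a one-dimensional positive spectral subspace, write $\alpha=\alpha_1>0$ for its unique positive eigenvalue (necessarily simple) and let $\beta_1,\ldots,\beta_{d-1}\leq 0$ be its remaining, non-positive eigenvalues. With $k=1$ and $m=d-1$, the sum in \eqref{eqn:set-measure-formula} collapses to a single term,
\be
\Lambda_p(E)=\frac{\alpha^{d-1}}{\prod_{h=1}^{d-1}(\alpha-\beta_h)}\,,
\ee
so the claim $\Lambda_p(E)\leq 1/2$ is equivalent to $\prod_{h=1}^{d-1}(\alpha-\beta_h)\geq 2\alpha^{d-1}$.

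Next I would bring in the only structural input beyond the formula, namely the trace constraint. Writing $\gamma_h=-\beta_h\geq 0$, the bound $\tr A_p(E)\leq 0$ from \eqref{eqn:trace-equation} reads $\alpha+\sum_{h}\beta_h\leq 0$, i.e.
\be
\sum_{h=1}^{d-1}\gamma_h\geq \alpha\,,
\ee
and the target inequality becomes $\prod_{h=1}^{d-1}(\alpha+\gamma_h)\geq 2\alpha^{d-1}$. It thus suffices to minimize $\prod_h(\alpha+\gamma_h)$ over the region $\{\gamma_h\geq 0,\ \sum_h\gamma_h\geq\alpha\}$ and verify that the minimum equals $2\alpha^{d-1}$.

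The key step is this minimization. Since each factor is increasing in its $\gamma_h$, the minimum is attained on the face $\sum_h\gamma_h=\alpha$, a $(d-2)$-simplex. Passing to logarithms, $\sum_h\log(\alpha+\gamma_h)$ is a concave function of $(\gamma_1,\ldots,\gamma_{d-1})$, and a concave function on a compact convex polytope attains its minimum at a vertex. The vertices of this simplex are the points with a single $\gamma_h=\alpha$ and all others zero; there the product equals $(2\alpha)\,\alpha^{d-2}=2\alpha^{d-1}$. Hence $\prod_h(\alpha+\gamma_h)\geq 2\alpha^{d-1}$ throughout the region, which yields $\Lambda_p(E)\leq 1/2$ as desired; for $d=2$ this recovers Thm.~\ref{thm:2d-guessing-set}.

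I expect the only genuinely delicate point to be the reading of the hypothesis: ``only one positive eigenvalue'' must mean that the positive spectral subspace is one-dimensional, so that $\alpha$ is simple and the $k=1$ specialization of \eqref{eqn:set-measure-formula} is legitimate. This matters, since a degenerate positive eigenvalue can violate the bound---for instance, in $d=3$ the spectrum $(\alpha,\alpha,-2\alpha)$ already gives $\Lambda_p(E)=5/9>1/2$ via the degenerate ($k=2$) limit of the formula, so the simplicity of $\alpha$ is genuinely used. The remaining care is routine: confirming that the minimizing configuration lies on the simplex $\sum_h\gamma_h=\alpha$ (rather than in the interior where the constraint is slack) and that the concave minimum is attained at a vertex.
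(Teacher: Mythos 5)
Your proposal is correct, and its skeleton is the same as the paper's: specialize \eqref{eqn:set-measure-formula} to $k=1$, reduce the claim to $\prod_{h=1}^{d-1}(\alpha-\beta_h)\geq 2\alpha^{d-1}$, and close using only $\tr A_p(E)\leq 0$ from \eqref{eqn:trace-equation}. The one place you diverge is the final step: the paper finishes in one line with the elementary superadditivity inequality $\prod_i(1+x_i)\geq 1+\sum_i x_i$ for $x_i=-\beta_i/\alpha\geq 0$ (Eq.~\eqref{prod1+xi}), which gives $\prod_i(1-\beta_i/\alpha)\geq 2-\tr A_p(E)/\alpha\geq 2$ directly, whereas you run a minimization of the concave function $\sum_h\log(\alpha+\gamma_h)$ over the simplex $\{\gamma_h\geq 0,\ \sum_h\gamma_h=\alpha\}$ and evaluate at a vertex. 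Your argument is sound (the reduction to the face $\sum_h\gamma_h=\alpha$ is justified by monotonicity, and a concave function on a compact polytope is minimized at an extreme point), but it is heavier machinery for the same elementary fact; its one advantage is that it exhibits the extremal configuration (one $\gamma_h=\alpha$, the rest zero) explicitly, which connects nicely to your observation---correct, and worth making---that the bound genuinely fails when the positive eigenvalue is allowed multiplicity greater than one, e.g.\ the spectrum $(\alpha,\alpha,-2\alpha)$ in $d=3$ giving $5/9$, consistent with Conjecture~\ref{conj:that-damn-conjecture}.
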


\begin{proof}
Fix $E$ such that the eigenvalues of $A_p(E)$ are $\alpha > 0 \geq \beta_1 \geq \beta_2 \geq ... \geq \beta_{d-1}$. Under this condition, \eqref{eqn:set-measure-formula} gives
\begin{equation}
\begin{split}
\Lambda_p(E) & = \frac{\alpha^{d-1}}{\prod_{h = 1}^{d-1}(\alpha - \beta_h)} \\
& = \prod_{i = 1}^{d-1} \frac{1}{1 - \beta_i/\alpha}\,.
\end{split}
\end{equation}
It therefore suffices to show $\prod_{i = 1}^{d-1} (1 - \beta_i/\alpha) \geq 2$. Note that for each $i$, $-\beta_i/\alpha$ is non-negative. If $x, y \geq 0$, we have that $(1+x)(1+y) \geq 1 + x + y$. Generalizing this,
\be\label{prod1+xi}
\prod_{i=1}^{d-1} (1+x_i) \geq 1 + \sum_{i=1}^{d-1}x_i \quad \text{for }x_i\geq 0\,, 
\ee
and thus
\begin{equation}
\begin{split}
\prod_{i = 1}^{d-1} (1 - \beta_i/\alpha) & \geq 1 - \frac{1}{\alpha} \sum_{i = 1}^{d-1} \beta_i  \\
 & =  1 - \frac{1}{\alpha}(\tr A_p(E) - \alpha) \\ 
& =  2 - \tr A_p(E)/\alpha \\
& \geq 2.
\end{split}
\end{equation}
The last step follows since $\tr A_p(E)$ is non-positive.
\end{proof}

The situation is slightly improved in the alternate case.

\begin{theorem}
\label{thm:one-negative-eigenvalue}
For $p < \frac{\ln 2}{1 + \ln 2} \approx 0.409$ and $p > \frac{1}{1 + \ln 2} \approx 0.591$, for any operator $0 \leq E \leq I$ for which $A_p(E)$ has only one negative eigenvalue, $\Lambda_p(E) \leq 1/2$.
\end{theorem}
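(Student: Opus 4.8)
The plan is to reduce $\Lambda_p(E)$ to a single product and then supply a sharp bound on the positive spectrum of $A_p(E)$, after which the stated thresholds fall out of an elementary exponential inequality.

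First I would record the special shape of the measure formula in this case. Since $A_p(E)$ has exactly one negative eigenvalue $\beta<0$, its remaining eigenvalues are the (possibly repeated) positive ones $\alpha_1,\dots,\alpha_n$ together with some zeros; the zeros drop out of $\braket{\psi|A_p(E)|\psi}$. Specializing Thm.~\ref{thm:measure-formula} with $m=1$ and simplifying the resulting sum by the partial-fraction identity $\sum_{i}\alpha_i^{n}/[(\alpha_i-\beta)\prod_{j\neq i}(\alpha_i-\alpha_j)]=1-\prod_i(-\beta)/(\alpha_i-\beta)$ — or, more transparently and with no non-degeneracy assumption, by using the probabilistic representation of $\Lambda_p(E)$ (write $\braket{\psi|A_p(E)|\psi}=\sum_\ell\lambda_\ell G_\ell/\sum_\ell G_\ell$ for i.i.d.\ unit-exponential $G_\ell$, and integrate out the exponential attached to $\beta$) — I obtain
\[
\Lambda_p(E) = 1 - \prod_{i=1}^{n}\frac{|\beta|}{|\beta|+\alpha_i} = 1 - \prod_{i=1}^{n}\Bigl(1+\tfrac{\alpha_i}{|\beta|}\Bigr)^{-1}.
\]
Thus it suffices to prove $\prod_{i=1}^n(1+\alpha_i/|\beta|)\leq 2$, exactly mirroring the reduction in Thm.~\ref{thm:one-positive-eigenvalue}.

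The crux is a sharp bound on $\lambda^+=\sum_i\alpha_i$. Treating $p\leq1/2$ first, write $A_p(E)=p\diag E-(1-p)E$ and let $P^+$ be the projection onto the positive spectral subspace of $A_p(E)$, so $\lambda^+=\tr[A_p(E)P^+]=p\tr[\diag E\,P^+]-(1-p)\tr[E P^+]$. Since $E\geq0$ gives $\tr[EP^+]\geq0$, and since $\diag E\geq0$ with $P^+\leq I$ gives $\tr[\diag E\,P^+]\leq\tr\diag E=\tr E$, I get $\lambda^+\leq p\tr E$. Combining this with the trace identity \eqref{eqn:trace-equation}, $\lambda^+-|\beta|=\tr A_p(E)=-(1-2p)\tr E$, I eliminate $\tr E$ and arrive at the key estimate
\[
\sum_{i=1}^n\alpha_i \;=\; \lambda^+ \;\leq\; \frac{p}{1-p}\,|\beta|\,.
\]
This is the step I expect to be the main obstacle: the naive bound $\lambda^+\leq|\beta|$ coming from $\tr A_p(E)\leq0$ alone would only yield $\prod(1+\alpha_i/|\beta|)\leq e$, i.e.\ $\Lambda_p(E)\leq1-1/e\approx0.632$, and it is precisely the extra factor $p/(1-p)$ that produces the stated threshold.

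Finally I would conclude with the elementary inequality $\prod_i(1+x_i)\leq\exp(\sum_i x_i)$:
\[
\prod_{i=1}^n\Bigl(1+\frac{\alpha_i}{|\beta|}\Bigr)\leq\exp\Bigl(\frac1{|\beta|}\sum_{i=1}^n\alpha_i\Bigr)\leq\exp\Bigl(\frac{p}{1-p}\Bigr)\leq2,
\]
where the last inequality is equivalent to $p/(1-p)\leq\ln2$, i.e.\ $p\leq\ln2/(1+\ln2)$; hence $\Lambda_p(E)\leq1/2$ in this range. For $p>1/2$ I would run the identical argument with $F=I-E$ and $A_p(E)=(1-p)F-p\diag F$: the same trace manipulation gives $\lambda^+\leq(1-p)\tr F$ and, via \eqref{eqn:trace-equation}, $\lambda^+\leq\tfrac{1-p}{p}|\beta|$, so that $\exp((1-p)/p)\leq2$ is equivalent to $p\geq1/(1+\ln2)$, completing the proof.
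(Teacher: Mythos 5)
Your proof is correct and follows essentially the same route as the paper's: the same complement trick reducing the claim to $\prod_i(1+\alpha_i/|\beta|)\leq 2$, the same key estimate $\lambda^+\leq\frac{p}{1-p}|\beta|$ (this is the paper's Cor.~\ref{late-night-corollary}, obtained there from the Ky Fan inequality rather than from your direct spectral-projection computation $\lambda^+=\tr[A_p(E)P^+]\leq p\tr E$), and the same exponential bound at the end (the paper uses AM--GM together with $(1+x/n)^n\nearrow e^x$ where you use $\prod_i(1+x_i)\leq e^{\sum_i x_i}$). These are only cosmetic variations on the paper's argument.
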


The proof appears in Sec.~\ref{sec:proofs:lambda}. Thm.~\ref{thm:one-negative-eigenvalue} is not as strong as Thm.~\ref{thm:one-positive-eigenvalue}, as the conclusion does not hold for all $p$. In fact, it is possible to find $E$ that are more reliable than blind guessing on greater than half of $\SSS$.

\begin{theorem}
\label{thm:counter-example}
For every $d \geq 3$, for some values of $p$, there exist operators $0 \leq E \leq I$ such that $\Lambda_p(E) > 1/2$.
\end{theorem}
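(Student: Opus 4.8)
The plan is to exhibit, for each $d\ge 3$, an explicit operator $E$ and a value of $p$ for which Theorem~\ref{thm:measure-formula} yields $\Lambda_p(E)>1/2$. Since Theorem~\ref{thm:one-positive-eigenvalue} rules this out whenever $A_p(E)$ has a single positive eigenvalue, the example must force $A_p(E)$ to have $d-1$ positive eigenvalues and only one negative eigenvalue; and since Theorem~\ref{thm:one-negative-eigenvalue} handles the one-negative case only outside the window around $1/2$, I should expect to choose $p$ close to $1/2$.

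The candidate I would try is the rank-one projection onto the uniform superposition, $E=\pr{\phi}$ with $\phi=d^{-1/2}\sum_{k=1}^d b_k$, which manifestly satisfies $0\le E\le I$. Taking $p\le 1/2$ gives $A_p(E)=p\diag E-(1-p)E$. Because $\phi$ has equal-modulus components we have $\diag E=d^{-1}I$, so $A_p(E)=(p/d)I-(1-p)\pr{\phi}$, a multiple of the identity minus a rank-one projection. Its spectrum is therefore transparent: the eigenvalue $\alpha:=p/d$ with multiplicity $d-1$ (on $\phi^\perp$) and the single eigenvalue $\beta:=p/d-(1-p)$ on $\CCC\phi$. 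One checks $\beta<0$ for $p<d/(d+1)$, in particular for $p\le 1/2$, and conveniently $\alpha+|\beta|=1-p$.

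Thus $A_p(E)$ has exactly one negative eigenvalue and a degenerate positive eigenvalue. To compute $\Lambda_p(E)$ I would pass to the complement, $\Lambda_p(E)=\mu\{\braket{\psi|A_p(E)|\psi}>0\}=1-\mu\{\braket{\psi|(-A_p(E))|\psi}>0\}$, the boundary quadric having $\mu$-measure zero. Now $-A_p(E)$ has a single positive eigenvalue $|\beta|$ together with the non-positive eigenvalue $-\alpha$ of multiplicity $d-1$ (degeneracy among the non-positive eigenvalues being permitted in Theorem~\ref{thm:measure-formula}), so the one-positive-eigenvalue formula from the proof of Theorem~\ref{thm:one-positive-eigenvalue} gives $\mu\{-A_p(E)>0\}=\bigl(|\beta|/(\alpha+|\beta|)\bigr)^{d-1}=\bigl(|\beta|/(1-p)\bigr)^{d-1}$. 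Hence $\Lambda_p(E)=1-t^{d-1}$ with $t=|\beta|/(1-p)$.

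It then remains to choose $p$ so that $t^{d-1}<1/2$. Evaluating at $p=1/2$ gives $|\beta|=(d-1)/(2d)$ and $t=(d-1)/d$, so $\Lambda_{1/2}(E)=1-((d-1)/d)^{d-1}$, and the crux is the elementary inequality $((d-1)/d)^{d-1}<1/2$ for all $d\ge 3$: this sequence equals $4/9$ at $d=3$ and decreases monotonically toward $1/e$, staying below $1/2$. Since $t$, and hence $\Lambda_p(E)$, depends continuously on $p$ while $((d-1)/d)^{d-1}$ is bounded away from $1/2$, the strict inequality persists on an open interval of values $p<1/2$, giving genuine $p\ne 1/2$ as required. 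I expect the only real subtlety to be the realizability issue flagged before Theorem~\ref{thm:2d-guessing-set}: one must produce a legitimate $0\le E\le I$ whose $A_p(E)$ has more than one positive eigenvalue while respecting $\tr A_p(E)\le 0$. That trace constraint forces $|\beta|\ge(d-1)\alpha$, hence $t\ge(d-1)/d$, and the entire argument hinges on the fact that even this extremal $t$ still satisfies $t^{d-1}<1/2$ for $d\ge 3$; this is precisely why the statement fails at $d=2$, where $((d-1)/d)^{d-1}=1/2$, consistent with Theorem~\ref{thm:2d-guessing-set}. The uniform-superposition projection is the natural choice because it saturates this trace bound at $p=1/2$ and renders the spectrum explicit.
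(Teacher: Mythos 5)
Your proposal is correct and follows essentially the same route as the paper: the same uniform-superposition projection $E=\pr{\phi}$, the same spectral computation giving $\alpha=p/d$ with multiplicity $d-1$ and a single negative eigenvalue $\beta=p/d-(1-p)$, and the same passage to $-A_p(E)$ to apply Thm.~\ref{thm:measure-formula}, arriving at $\Lambda_p(E)=1-\bigl(1-\tfrac{p}{d(1-p)}\bigr)^{d-1}$. The only (harmless) difference is that you evaluate at $p=1/2$ and invoke continuity, whereas the paper solves the inequality explicitly to obtain the threshold \eqref{eqn:good-p}.
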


\begin{proof}
Consider $E=\pr{\phi}$ with $\phi_k=1/\sqrt{d}$ for all $k=1,\ldots,d$. The eigenvalues of $E$ are $1$ and $0$, $0$ having multiplicity $d-1$. For $p \leq 1/2$, the eigenvalues of $A_p(E)$ are $p/d$ with multiplicity $d-1$, and $p/d - (1-p)$. In this case, we have a single negative eigenvalue.

For computational purposes, it is convenient to perform the following inversion and compute relative to $-A_p(E)$, which has a single positive eigenvalue: Since the set for which $\braket{\psi|A_p(E)|\psi} = 0$ has measure 0, we have that the measure of the set for which $\braket{\psi|A_p(E)|\psi} > 0$ and the measure of the set for which $\braket{\psi|A_p(E)|\psi} < 0$, or $\braket{\psi|-A_p(E)|\psi} > 0$, must sum to 1. Hence, by \eqref{eqn:set-measure-formula},
\begin{equation}\label{eqn:counter-example-calculation}
\begin{split}
\Lambda_p(E) & = 1 - \mu \Bigl\{ \psi \in \SSS : \braket{\psi|-A_p(E)|\psi} > 0 \Bigr\} \\
& =  1 - \frac{\bigl((1-p) - p/d\bigr)^{d-1}}{\prod_{h = 1}^{d-1}((1-p) - p/d  + p /d) } \\
& = 1 - \frac{\bigl((1-p) - p/d\bigr)^{d-1}}{(1-p)^{d-1} } \\
& = 1 - \left( 1 - \frac{p}{d(1-p)} \right)^{\!\! d-1}.
\end{split}
\end{equation}
From this, for $p \leq 1/2$, $\Lambda_p(E) > 1/2$ for
\begin{equation}
\label{eqn:good-p}
p > 1 - \frac{1}{1 + d\Bigl(1 - (\frac{1}{2})^{\frac{1}{d-1}}\Bigr)} \,.
\end{equation}
For $d = 3$, this gives $p \gtrsim 0.4677$, and the bound decreases as $d \rightarrow \infty$ to $\ln 2 /(1 + \ln 2) \approx 0.409$. Below this point, Thm.~\ref{thm:one-negative-eigenvalue} applies.
\end{proof}

This example shows that, at least for some values of $p$, it is possible to be more reliable than blind guessing on more than half of $\SSS$. The previous results give obvious bounds on this in terms of $p$ and $E$, but the question remains: How much better than blind guessing can $E$ be, over $\SSS$? What is the maximal value of $\Lambda_p(E)$?

Prior results, Thm.s~\ref{thm:one-negative-eigenvalue} and \ref{thm:chernoff-theorem} and Prop.~\ref{prop:markov-bound}, have implied a certain symmetry with respect to $p$, centered on $p = 1/2$, with something peculiar at $p = 1/2$. This is emphasized by the following result.

\begin{theorem}\label{thm:maximum}
For any operator $0 \leq E \leq I$, $\Lambda_p(E)$ is unimodal as a function of $p$, with a maximum at $p = 1/2$. Hence, $\Lambda_p(E) \leq \Lambda_{1/2}(E)$ for all $p, E$.
\end{theorem}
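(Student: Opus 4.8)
The plan is to reduce the statement to a \emph{pointwise-in-$\psi$} monotonicity of the quadratic form $p\mapsto \braket{\psi|A_p(E)|\psi}$ on each of the two intervals $[0,1/2]$ and $[1/2,1]$, and then pass to measures. The starting point is the characterization \eqref{psiApsi0}: for each fixed $\psi$, setting $f_\psi(p)=\braket{\psi|A_p(E)|\psi}$, we have $\Lambda_p(E)=\mu\{\psi:f_\psi(p)>0\}$. The key observation is that if $f_\psi(p)$ is nondecreasing in $p$ for every fixed $\psi$, then the superlevel sets $\{\psi:f_\psi(p)>0\}$ are nested increasingly, so their measures — i.e.\ $\Lambda_p(E)$ — increase; and symmetrically for a nonincreasing $f_\psi$. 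So the entire theorem follows once I establish that $f_\psi$ is nondecreasing in $p$ on $[0,1/2]$ and nonincreasing on $[1/2,1]$, which is just the Loewner monotonicity of $p\mapsto A_p(E)$ along each half-line.

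On $[0,1/2]$ I would use $A_p(E)=p\diag E-(1-p)E$ from \eqref{eqn:ApE-definition} and write $f_\psi(p)=p(a+b)-b$, where $a=\braket{\psi|\diag E|\psi}$ and $b=\braket{\psi|E|\psi}$. Since $0\leq E$ forces both $E\geq 0$ and $\diag E\geq 0$ (the diagonal entries of a positive operator are nonnegative), we have $a,b\geq 0$, so $f_\psi$ is affine in $p$ with slope $a+b\geq 0$, hence nondecreasing. Consequently, for $p_1<p_2\leq 1/2$ the set $\{f_\psi(p_1)>0\}$ is contained in $\{f_\psi(p_2)>0\}$, and taking $\mu$-measures gives $\Lambda_{p_1}(E)\leq \Lambda_{p_2}(E)$.

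On $[1/2,1]$ the same computation applies with $F=I-E$: using $A_p(E)=(1-p)(I-E)-p\diag(I-E)$, I get $f_\psi(p)=c-p(c+e)$ with $c=\braket{\psi|(I-E)|\psi}\geq 0$ and $e=\braket{\psi|\diag(I-E)|\psi}=\braket{\psi|(I-\diag E)|\psi}\geq 0$, both nonnegative because $0\leq E\leq I$ makes $I-E$ and $\diag(I-E)$ positive. Thus $f_\psi$ is affine with slope $-(c+e)\leq 0$, so it is nonincreasing; the superlevel sets shrink as $p$ grows, and $\Lambda_p(E)$ is nonincreasing on $[1/2,1]$. Finally I would check that the two branches match at $p=1/2$, where both formulas reduce to $A_{1/2}(E)=\tfrac12(\diag E-E)$, so the nondecreasing and nonincreasing pieces join consistently. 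Combining the two monotonicities yields unimodality with maximum at $p=1/2$, and in particular $\Lambda_p(E)\leq \Lambda_{1/2}(E)$ for all $p,E$.

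I do not anticipate a serious analytic obstacle here; the argument is essentially a sign computation. The one point that requires care is verifying that the slopes $a+b$ and $-(c+e)$ are genuine quadratic forms of definite sign, which rests precisely on the positivity of $E$, of $\diag E$, and of the corresponding complementary operators $I-E$ and $\diag(I-E)$ — i.e.\ on the constraint $0\leq E\leq I$ together with the fact that diagonalization preserves positivity. Confirming the boundary matching at $p=1/2$ and noting that the nesting of superlevel sets only needs the \emph{weak} monotonicity of $f_\psi$ (so that degenerate directions with $a=b=0$ cause no trouble) completes the plan.
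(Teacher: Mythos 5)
Your proposal is correct and follows essentially the same route as the paper's own proof: both establish that $p\mapsto\braket{\psi|A_p(E)|\psi}$ is nondecreasing on $[0,1/2]$ and nonincreasing on $[1/2,1]$ for each fixed $\psi$, using the positivity of $E$, $\diag E$, $I-E$, and $\diag(I-E)$, and then pass from the nesting of the superlevel sets to the monotonicity of $\Lambda_p(E)$. The only cosmetic difference is that you phrase the monotonicity via the sign of the affine slope of $f_\psi(p)$ while the paper computes the difference $A_p(E)-A_q(E)=(p-q)(\diag E+E)$ directly.
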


\begin{proof}
Let $q < p \leq 1/2$, and let $E$ be arbitrary in $[0, I]$. In this case, for any $\psi \in \SSS$,
\begin{equation}
\begin{split}
\braket{\psi|A_p(E)|\psi} - \braket{\psi|A_q(E)|\psi} & = (p-q) \braket{\psi|\diag E|\psi} + ((1-q) - (1-p)) \braket{\psi|E|\psi} \\
& = (p-q) \left( \braket{\psi|\diag E|\psi} + \braket{\psi|E|\psi} \right) \\
& \geq 0\,.
\end{split}
\end{equation}
The last step follows since $p > q$, and $\diag E$ and $E$ are positive operators. Hence,
\begin{equation}
\braket{\psi|A_p(E)|\psi} \geq \braket{\psi|A_q(E)|\psi}.
\end{equation}
As $\braket{\psi|A_p(E)|\psi}$ is non-decreasing with $p$ for $p \leq 1/2$, $\Lambda_p(E)$ is similarly non-decreasing with $p$ for $p \leq 1/2$. A similar argument holds for $p \geq 1/2$, showing $\braket{\psi|A_p(E)|\psi}$ is non-increasing. Hence, $\Lambda_p(E)$ is non-decreasing for $p < 1/2$, non-increasing for $p > 1/2$.
\end{proof}

This result suggests the following, which we offer as a conjecture.

\begin{conjecture}
\label{conj:that-damn-conjecture}
For all $d\geq 2$, all operators $0 \leq E \leq I$, and all $0 < p < 1$,
\begin{equation}
\label{eqn:that-damn-conjecture}
\Lambda_p(E) \leq 1 - \left( 1 - \frac{1}{d}  \right)^{d-1}\,.
\end{equation}
In particular, $\Lambda_p(E) \leq 1 - 1/e \approx 0.632$.
\end{conjecture}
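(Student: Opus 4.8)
The plan is to reduce the statement to the single case $p=1/2$ and then to a purely spectral optimization. By Thm.~\ref{thm:maximum}, $\Lambda_p(E)\le\Lambda_{1/2}(E)$ for every $E$, so it suffices to bound $\Lambda_{1/2}(E)$. At $p=1/2$ we have $A_{1/2}(E)=\tfrac12(\diag E-E)$, which is self-adjoint with $\tr A_{1/2}(E)=0$ by \eqref{eqn:trace-equation}. Writing $\psi$ in an eigenbasis of $A_{1/2}(E)$ with eigenvalues $\lambda_1,\ldots,\lambda_d$, the squared coordinates $(|c_1|^2,\ldots,|c_d|^2)$ of a uniformly distributed $\psi\in\SSS\subseteq\CCC^d$ follow a $\mathrm{Dirichlet}(1,\ldots,1)$ law, which may be represented as $|c_i|^2=G_i/\sum_j G_j$ with $G_1,\ldots,G_d$ i.i.d.\ $\mathrm{Exp}(1)$. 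Since $\braket{\psi|A_{1/2}(E)|\psi}=\sum_i\lambda_i|c_i|^2$ has the same sign as $\sum_i\lambda_i G_i$, and the set $\braket{\psi|A_{1/2}(E)|\psi}=0$ has measure zero for $E\notin\{0,I\}$, I obtain the representation
\be
\Lambda_{1/2}(E)=\PPP\Bigl(\sum_{i=1}^d\lambda_i G_i>0\Bigr),\qquad \sum_{i=1}^d\lambda_i=0\,.
\ee

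Next I would drop the (complicated) constraints describing which spectra actually arise from $\tfrac12(\diag E-E)$ and instead maximize the right-hand side over \emph{all} real tuples with $\sum_i\lambda_i=0$. This can only enlarge the supremum, so it yields an upper bound; and because the bound will be attained by the admissible operator $E=\pr{\phi}$ with $\phi_k=1/\sqrt d$ (the example of Thm.~\ref{thm:counter-example}, for which the spectrum of $A_{1/2}(E)$ is $\tfrac1{2d}$ with multiplicity $d-1$ together with $-\tfrac{d-1}{2d}$), the relaxation is tight and no characterization of the admissible spectra is needed. Zero eigenvalues only lower the effective dimension, so I may assume $k\ge1$ positive eigenvalues $\alpha_1,\dots,\alpha_k$ and $m=d-k\ge1$ strictly negative eigenvalues $-\gamma_1,\dots,-\gamma_m$ with $\sum_i\alpha_i=\sum_j\gamma_j$. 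Setting $U=\sum_i\alpha_iG_i$ and $V=\sum_j\gamma_jH_j$ (independent, with $H_j$ further i.i.d.\ $\mathrm{Exp}(1)$), the task becomes to maximize $\PPP(U>V)$.

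I would then establish the optimum through two structural reductions. First, \emph{concentrate all negative weight into a single eigenvalue}: replacing $V$ by one $\mathrm{Exp}(1)$ variable of the same mean should not decrease $\PPP(U>V)$. Second, given a single negative eigenvalue (normalized, after scaling, to weight $1$ so that $V\sim\mathrm{Exp}(1)$), \emph{spread the positive weight evenly over all remaining dimensions}: here $\PPP(U>V)=1-\EEE[e^{-U}]=1-\prod_{i=1}^k(1+\alpha_i)^{-1}$, and since $\log(1+x)$ is concave, $\prod_i(1+\alpha_i)$ is maximized under $\sum_i\alpha_i=1$ by the equal choice $\alpha_i=1/k$, whose value $(1+1/k)^k$ increases in $k$; taking $k=d-1$ gives $\prod_i(1+\alpha_i)=(1+\tfrac1{d-1})^{d-1}$ and hence
\be
\Lambda_{1/2}(E)\le 1-\Bigl(1-\tfrac1d\Bigr)^{\!d-1}\,.
\ee
Combined with Thm.~\ref{thm:maximum} this is exactly \eqref{eqn:that-damn-conjecture}, and since $(1-1/d)^{d-1}$ decreases to $1/e$, the bound is at most $1-1/e$, giving the ``in particular'' clause.

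The first reduction is the main obstacle. The natural tool is the convex order: among all $V=\sum_j\gamma_jH_j$ with $\sum_j\gamma_j$ fixed, the single exponential has the largest variance and is largest in the convex order, so $\EEE[\phi(V)]$ is maximized by it for every convex $\phi$. Unfortunately the relevant test function is the survival function $\bar F_U(v)=\PPP(U>v)$, which for a genuine sum of exponentials fails to be convex near $v=0$ (already for $U$ a sum of two exponentials its second derivative is negative at the origin), so convex order alone does not close the argument. I therefore expect the real work to lie in either (i) a finer stochastic-ordering or direct integral estimate showing that concentrating the negative weight increases $\PPP(U>V)$ for the \emph{jointly optimal} $U$, or (ii) a KKT/perturbation analysis of the explicit measure formula \eqref{eqn:set-measure-formula} under the single constraint $\sum_i\lambda_i=0$, verifying that every stationary point other than $(\alpha,\dots,\alpha,-(d-1)\alpha)$ is suboptimal. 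A promising intermediate route is to prove one ``local move'' — merging two negative eigenvalues into one while reallocating the freed dimension to the re-equalized positive part never decreases $\Lambda$ — and to iterate it down to the extremal configuration.
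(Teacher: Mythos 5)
The statement you are trying to prove is labelled a \emph{conjecture} in the paper: the authors do not prove it either, and they explicitly identify the obstruction --- one can show that the configuration with $d-1$ equal positive eigenvalues and a single negative eigenvalue is a \emph{local} maximum of \eqref{eqn:set-measure-formula} under the trace-zero constraint, but not that it is a \emph{global} maximum. Your framework is exactly the one the paper sketches: reduce to $p=1/2$ via Thm.~\ref{thm:maximum}, note $\tr A_{1/2}(E)=0$, relax to arbitrary trace-zero spectra (tight because the extremal spectrum is realized by $E=\pr{\phi}$ from Thm.~\ref{thm:counter-example}), and use the exponential-variable representation $\Lambda_{1/2}(E)=\PPP(\sum_i\lambda_iG_i>0)$, which is precisely the device in the paper's proof of Thm.~\ref{thm:measure-formula}. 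Your second reduction is correct and cleanly executed: with one negative eigenvalue normalized to weight $1$, $\PPP(U>V)=1-\prod_i(1+\alpha_i)^{-1}$, and concavity of $\log(1+x)$ plus monotonicity of $(1+1/k)^k$ in $k$ gives exactly $1-(1-1/d)^{d-1}$; this is essentially a sharpened, $p=1/2$ version of the argument proving Thm.~\ref{thm:one-negative-eigenvalue}.

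The genuine gap is your first reduction --- that concentrating all negative spectral weight into a single eigenvalue cannot decrease $\PPP(U>V)$ --- and you correctly diagnose why your proposed tool fails: the natural route via the convex order would require the survival function $\bar F_U$ to be convex, but for $U$ a sum of two or more independent exponentials the density vanishes at $0$ and increases nearby, so $\bar F_U$ is concave near the origin and the convex-order comparison does not close. Your fallback options (a finer stochastic ordering, a KKT analysis of \eqref{eqn:set-measure-formula}, or an iterated ``merge two negative eigenvalues'' move) are reasonable directions but are stated as programs rather than arguments; the difficulty they must overcome is exactly the one the paper cites as the reason the statement remains a conjecture (the formula \eqref{eqn:set-measure-formula} is highly non-linear with terms of alternating sign, and the admissible spectra of $A_p(E)$ are hard to characterize). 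In short: your proposal correctly isolates and proves everything except the one step that is actually open, so it does not constitute a proof, and no proof exists in the paper to compare it against.
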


Note that, by Thm.~\ref{thm:2d-guessing-set}, this holds for $d = 2$.

Given Thm.~\ref{thm:maximum}, this conjecture is simply a bound on $\Lambda_{1/2}(E)$ for any $E$. Noting that $\tr A_{1/2}(E) = 0$, we may consider the problem of maximizing \eqref{eqn:set-measure-formula} subject to the constraint that
\begin{equation}
 \sum_{i = 1}^k \alpha_i + \sum_{j = 1}^m \beta_j = 0.
 \end{equation}
Allowing the extension to the degenerate case, it can be shown that \eqref{eqn:set-measure-formula} achieves a local maximum with $d-1$ positive eigenvalues, or $\alpha_1 = \alpha_2 = ... = \alpha_{d-1} = -\beta_1/(d-1)$. This gives the result of the conjecture. Essentially, this conjecture states that the example in Thm.~\ref{thm:counter-example} is the best possible case for $p = 1/2$.

However, while it can be shown that this is a local maximum, it is not clear that this is a \emph{global} maximum.  Computationally, the difficulty largely stems from the formula for $\Lambda_p(E)$ being distressingly non-linear, and having terms of alternating sign. Intuitively, however, this seems plausible as $d-1$ positive eigenvalues and $1$ negative eigenvalue would create the largest possible subspace for which $\braket{\psi|A_{1/2}(E)|\psi}$ was positive.

The above bounds suggest that, overall, there are limitations to finding experiments that are more reliable than blind guessing in an informationless, assumption-free setting. The above conjecture entails that there is an absolute upper limit to how large a fraction of $\SSS$ any experiment can be more reliable than blind guessing on. If this conjecture is true, then Thm.~\ref{thm:chernoff-theorem} could be greatly improved. Again, all this hinges on the eigenvalue distribution of $A_p(E)$ for $ 0 \leq E \leq I$, which seems difficult to control.

%%%%%%%%%%%%%%%%%%%%%%%%%%%%%%%%%%%%%%%%%%%%%%%%
%%%%%%%%%%%%%%%%%%%%%%%%%%%%%%%%%%%%%%%%%%%%%%%%

\section{Proofs}\label{sec:proofs}

This section contains proofs of theorems not given in the previous sections, generally for being too long or unenlightening toward the main topic.

\subsection{The Spectrum of $A_p(E)$}\label{sec:proofs:spectrum}

The following result concern the eigenvalues of $A_p(E)$ as defined in \eqref{eqn:ApE-definition}, with $0 \leq E \leq I$ and $0 < p < 1$. These results are necessary for the proofs in Sec.~\ref{sec:proofs:lambda}, but are not directly connected to the existence or non-existence of reliable experiments, and hence have been relegated to this section.

To begin, consider $E$ as a $d$-dimensional Hermitian matrix with respect to the basis $B$. The operators $\diag E$, $I-E$, and $\diag (I - E)$ may be considered similarly. Given that the sum of Hermitian matrices is Hermitian, we have $A_p(E)$ as a Hermitian matrix as well.

Let $\lambda_i(H)$ denote the $i$-th largest eigenvalue of $H$. For $A, B, C$ Hermitian matrices with $A = B + C$, the Ky Fan eigenvalue inequality \cite{Fan49,Mos11} asserts that for any $m \leq d$,
\begin{equation}
\label{eqn:Ky-Fan}
\sum_{i = 1}^m \lambda_i(A) \leq \sum_{i = 1}^m \lambda_i(B) + \sum_{i = 1}^m \lambda_i(C),
\end{equation}
with equality when $m = d$, giving $\tr A = \tr B + \tr C$. 

In our case, let $1 \geq \mu_1 \geq \mu_2 \geq ... \geq \mu_d \geq 0$ be the eigenvalues of $E$, and $1 \geq d_1 \geq d_2 \geq ... \geq d_d \geq 0$ the diagonal entries of $E$---corresponding to the eigenvalues of $\diag E$. The eigenvalues and diagonal entries of $I - E$ are easy to express in terms of $\mu_i, d_i$. The Schur--Horn theorem \cite{Schur23,Horn54,SH} asserts that, for all $m \leq d$,
\begin{equation}
\label{eqn:SchurHornTheorem}
\sum_{i = 1}^m d_i \leq \sum_{i = 1}^m \mu_i
\end{equation}
with equality when $m = d$ (as can be seen by considering $\tr E$). From this, or from the facts that $\mu_1=\max_{\psi\in\SSS} \braket{\psi|E|\psi}$ and $\mu_d=\min_{\psi\in\SSS} \braket{\psi|E|\psi}$, it can also be shown that, for $0\leq E \leq I$,
\begin{equation}\label{eqn:horn-application}
0 \leq \mu_d \leq d_d \leq d_1 \leq \mu_1 \leq 1.
\end{equation}
We use these results to develop some control on the spectrum of $A_p(E)$.

In the case of $p \leq 1/2$, we have that $A_p(E) = p \diag E - (1-p)E$. Applying \eqref{eqn:Ky-Fan} to $B = p \diag E$ and $C = -(1-p)E$,
\begin{equation}
\label{eqn:p-less-than-1/2}
\sum_{i = 1}^m \lambda_i(A_p(E)) \leq p \left(\sum_{i = 1}^m  d_i\right)  - (1-p)\left(\sum_{i = d-m + 1}^d \mu_i \right).
\end{equation}

Similarly, for $p > 1/2$, noting the ordering of the eigenvalues of $I-E$ and $\diag(I-E)$,

\begin{equation}
\label{eqn:p-greater-than-1/2}
\sum_{i = 1}^m \lambda_i(A_p(E)) \leq (1-p)\left(\sum_{i = d-m + 1}^d  (1 - \mu_i)\right)   - p \left(\sum_{i = 1}^m (1-d_i)\right) .
\end{equation}
In the case that $m = d$, these reproduce the trace relation given by \eqref{eqn:trace-equation}.

Using these results, we may derive the following bounds on the spectrum of $A_p(E)$.

\begin{theorem}
Let $0 \leq E \leq I$.\\
For $p \leq 1/2$,
\begin{equation}
-\min(1, \tr E)(1-p)\leq \lambda_d(A_p(E)) \leq \lambda_1(A_p(E)) \leq \min (1, \tr E)p.
\end{equation}
For $p > 1/2$,
\begin{equation}
-\min\left(1, \tr \left[I-E\right]\right) p \leq \lambda_d(A_p(E)) \leq \lambda_1(A_p(E)) \leq \min \left(1, \tr \left[ I - E\right] \right)(1-p).
\end{equation}
\end{theorem}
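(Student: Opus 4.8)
The plan is to read both outer bounds straight off the Ky Fan inequality at $m=1$, using only the ordering \eqref{eqn:horn-application} together with the trace relations. The middle inequality $\lambda_d(A_p(E)) \le \lambda_1(A_p(E))$ is automatic, so everything reduces to the two endpoint bounds. I would treat the case $p \le 1/2$ in full and then obtain $p > 1/2$ by the substitution $E \mapsto I-E$.

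For $p \le 1/2$ and the upper bound, I would specialize the already-established \eqref{eqn:p-less-than-1/2} to $m=1$, which reads $\lambda_1(A_p(E)) \le p\,d_1 - (1-p)\mu_d$. Since $E \ge 0$ forces $\mu_d \ge 0$, the subtracted term is non-positive and may be dropped, leaving $\lambda_1(A_p(E)) \le p\,d_1$. It then suffices to note $d_1 \le \min(1,\tr E)$: indeed $d_1 \le \mu_1 \le 1$ by \eqref{eqn:horn-application}, while $d_1 \le \sum_i d_i = \tr E$ because every diagonal entry is non-negative. This gives $\lambda_1(A_p(E)) \le \min(1,\tr E)\,p$. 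For the lower bound I would instead apply the Ky Fan inequality \eqref{eqn:Ky-Fan} at $m=1$ to $-A_p(E) = (1-p)E - p\diag E$, taking $B=(1-p)E$ and $C=-p\diag E$; this yields $\lambda_1(-A_p(E)) \le (1-p)\mu_1 - p\,d_d \le (1-p)\mu_1$ (using $d_d \ge 0$), and $\mu_1 \le \min(1,\tr E)$ by the same two observations applied to the spectrum of $E$ itself, namely $\mu_1 \le 1$ from $E\le I$ and $\mu_1 \le \sum_i \mu_i = \tr E$. Since $\lambda_d(A_p(E)) = -\lambda_1(-A_p(E))$, this is precisely $\lambda_d(A_p(E)) \ge -\min(1,\tr E)(1-p)$.

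For $p > 1/2$ I would set $F = I-E$, which again satisfies $0 \le F \le I$, and observe from \eqref{eqn:ApE-definition} that $A_p(E) = (1-p)(I-E) - p\diag(I-E) = (1-p)F - p\diag F$, exactly the negative of the $p\le 1/2$ expression $p\diag F - (1-p)F$ evaluated on $F$. Crucially, the bounds derived above only used $0\le p\le 1$ and $0 \le F \le I$, never $p \le 1/2$ itself, so applying them to $p\diag F - (1-p)F$ and negating swaps the two endpoints and replaces $\tr E$ by $\tr F = \tr[I-E]$, reproducing the stated inequalities; alternatively one may rerun the argument verbatim from \eqref{eqn:p-greater-than-1/2}. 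I do not expect a genuine obstacle here, as the argument is essentially two applications of a single eigenvalue inequality. The only points that demand care are bookkeeping: in the subtraction term the relevant extreme eigenvalue of $-(1-p)E$ is $-(1-p)\mu_d$ rather than anything involving $\mu_1$, and one must notice that both $\le 1$ (from $E\le I$) and $\le \tr E$ (from positivity and the trace) are available, so the sharp constant is the minimum of the two.
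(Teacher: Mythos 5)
Your proposal is correct and follows essentially the same route as the paper: Ky Fan at $m=1$ for the upper bound, the Schur--Horn ordering \eqref{eqn:horn-application} together with non-negativity of the $\mu_i$ and $d_i$ to get the $\min(1,\tr E)$ constants, and the mirror argument for $p>1/2$. Your two small variations---obtaining the lower bound by applying Ky Fan at $m=1$ to $-A_p(E)$ rather than at $m=d-1$ to $A_p(E)$ with the trace relation, and handling $p>1/2$ by the substitution $E\mapsto I-E$ rather than rerunning the argument from \eqref{eqn:p-greater-than-1/2}---are equivalent reformulations, not a different method.
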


\begin{proof}
Let $p \leq 1/2$. In that case, taking $m = 1$ in \eqref{eqn:p-less-than-1/2} yields
\begin{equation}
\begin{split}
\lambda_1(A_p(E)) & \leq p d_1  - (1-p)\mu_d \\
& \leq p d_1\,.
\end{split}
\end{equation}
From \eqref{eqn:horn-application}, we have that $d_1 \leq 1$. However, the diagonal entries of $E$ are positive and sum to $\tr E$, hence $d_1 \leq \tr E$. This yields $d_1 \leq \min \left( 1, \tr E \right)$, which completes the upper bound on $\lambda_1(A_p(E))$.

The lower bound on $\lambda_d(A_p(E))$ follows similarly, taking $m = d-1$ in \eqref{eqn:p-less-than-1/2} and noting that
\be
\sum_{i = 1}^d \lambda_i(A_p(E)) = \tr A_p(E) = - (1-2p) \tr E
\ee
and
\be
\sum_{i = 1}^d \mu_i = \sum_{i = 1}^d d_i = \tr E.
\ee

The proof for $p > 1/2$ goes in exactly the same way, using \eqref{eqn:p-greater-than-1/2}.
\end{proof}

We may also use this approach to develop bounds on the sums of the positive eigenvalues and the sums of the negative eigenvalues of $A_p(E)$.

\begin{theorem}\label{eigenvaluesums}
Let $0 \leq E \leq I$. Define $\alpha$ to be the sum of the positive eigenvalues of $A_p(E)$, and $\beta$ to be the sum of the negative eigenvalues.\\
For $p \leq 1/2$,
\begin{equation}
-(1-p) \tr E \leq \beta \leq 0 \leq \alpha \leq p \tr E.
\end{equation}
For $p > 1/2$,
\begin{equation}
-p \tr \left[I-E\right] \leq \beta \leq 0 \leq \alpha \leq (1-p) \tr \left[I-E\right].
\end{equation}
\end{theorem}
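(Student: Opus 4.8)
The plan is to obtain the nontrivial upper bound on $\alpha$ directly from the Ky Fan inequalities already recorded in \eqref{eqn:p-less-than-1/2} and \eqref{eqn:p-greater-than-1/2}, and then to read off the matching lower bound on $\beta$ essentially for free from the trace identity \eqref{eqn:trace-equation}. The inner inequalities $\beta \leq 0 \leq \alpha$ hold by definition, since $\alpha$ is a sum of positive numbers (or an empty sum) and $\beta$ a sum of negative numbers (or an empty sum), so all the content lies in the two outer bounds.

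Consider first $p \leq 1/2$. Let $k$ be the number of strictly positive eigenvalues of $A_p(E)$; since the $\lambda_i(A_p(E))$ are listed in decreasing order, $\alpha = \sum_{i=1}^k \lambda_i(A_p(E))$. If $k = 0$ then $\alpha = 0$ and the bound $\alpha \leq p\tr E$ is immediate because $\tr E \geq 0$, so I may assume $k \geq 1$ and apply \eqref{eqn:p-less-than-1/2} with $m = k$, giving
\begin{equation}
\alpha \leq p\Bigl(\sum_{i=1}^k d_i\Bigr) - (1-p)\Bigl(\sum_{i=d-k+1}^d \mu_i\Bigr).
\end{equation}
The diagonal entries $d_i$ are non-negative and sum to $\tr E$, so $\sum_{i=1}^k d_i \leq \tr E$; and the $\mu_i$ are non-negative because $E \geq 0$, so the subtracted term is non-negative and may be discarded. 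This yields $\alpha \leq p\tr E$. For the matching bound on $\beta$ I would simply invoke $\alpha + \beta = \tr A_p(E) = -(1-2p)\tr E$ from \eqref{eqn:trace-equation}, so that
\begin{equation}
\beta = -(1-2p)\tr E - \alpha \geq -(1-2p)\tr E - p\tr E = -(1-p)\tr E.
\end{equation}

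The case $p > 1/2$ is handled identically, this time applying \eqref{eqn:p-greater-than-1/2} with $m$ again equal to the number of positive eigenvalues: using $0 \leq E \leq I$, the partial sum $\sum (1-\mu_i) \leq \tr[I-E]$ while $\sum(1-d_i) \geq 0$, so the same sign-tracking gives $\alpha \leq (1-p)\tr[I-E]$, and the trace identity $\alpha + \beta = -(2p-1)\tr[I-E]$ then forces $\beta \geq -p\tr[I-E]$.

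The argument is essentially a bookkeeping exercise once the Ky Fan inequalities are in hand, and I do not anticipate a genuine obstacle. The only points requiring care are the choice $m = k$ (the count of positive eigenvalues), so that the left side of the Ky Fan bound is exactly $\alpha$, and the sign tracking that lets the two subtracted partial sums be dropped in the favorable direction. Alternatively, one could derive the $\beta$ bound directly by applying \eqref{eqn:Ky-Fan} to $-A_p(E)$ rather than using the trace identity, but routing it through \eqref{eqn:trace-equation} is shorter.
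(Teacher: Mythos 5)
Your proposal is correct and matches the paper's own (very terse) argument exactly: apply the Ky Fan bounds \eqref{eqn:p-less-than-1/2} and \eqref{eqn:p-greater-than-1/2} with $m$ equal to the number of positive eigenvalues, drop the negative term, bound the remaining partial sum by the appropriate trace, and then obtain the lower bound on $\beta$ from $\beta = \tr A_p(E) - \alpha$ together with \eqref{eqn:trace-equation}. Your handling of the degenerate case $k=0$ and the sign tracking are both sound.
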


\begin{proof}
The upper bounds on $\alpha$ follow from \eqref{eqn:p-less-than-1/2} and \eqref{eqn:p-greater-than-1/2} by dropping the negative terms in the bounds, and bounding the remaining sum from above by the appropriate trace. The lower bounds on $\beta$ follow from the upper bounds on $\alpha$ using $\beta = \tr A_p(E)-\alpha$ and \eqref{eqn:trace-equation}.
\end{proof}

As a quick corollary, recalling \eqref{eqn:trace-equation}, we have that
\begin{corollary}\label{late-night-corollary}\ \\
For $p < 1/2$,
\begin{equation}
\frac{1-p}{1-2p} \tr A_p(E) \leq \beta \leq 0 \leq \alpha \leq -\frac{p}{1-2p} \tr A_p(E).
\end{equation}
For $p > 1/2$,
\begin{equation}
\frac{p}{2p-1} \tr A_p(E) \leq \beta \leq 0 \leq \alpha \leq -\frac{1-p}{2p-1} \tr A_p(E).
\end{equation}
\end{corollary}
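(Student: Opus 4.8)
The plan is to obtain this corollary by a direct substitution of the trace identity \eqref{eqn:trace-equation} into the bounds already established in Thm.~\ref{eigenvaluesums}, re-expressing the right-hand sides in terms of $\tr A_p(E)$ in place of $\tr E$ or $\tr[I-E]$. The strict inequalities $p<1/2$ and $p>1/2$ (rather than the $p\leq 1/2$ of Thm.~\ref{eigenvaluesums}) are exactly what is needed so that the relevant prefactor is nonzero and may be divided out.

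First I would treat the case $p<1/2$. Here \eqref{eqn:trace-equation} reads $\tr A_p(E) = -(1-2p)\tr E$, and since $1-2p>0$ this can be solved to give $\tr E = -\tr A_p(E)/(1-2p)$. Substituting this into the upper bound $\alpha \leq p\,\tr E$ from Thm.~\ref{eigenvaluesums} yields $\alpha \leq -\tfrac{p}{1-2p}\tr A_p(E)$, and substituting into the lower bound $\beta \geq -(1-p)\tr E$ yields $\beta \geq \tfrac{1-p}{1-2p}\tr A_p(E)$. The inner chain $\beta \leq 0 \leq \alpha$ is inherited verbatim from Thm.~\ref{eigenvaluesums}, since $\beta$ and $\alpha$ are, by definition, the sums of the negative and positive eigenvalues respectively.

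The case $p>1/2$ proceeds identically: now \eqref{eqn:trace-equation} gives $\tr A_p(E) = -(2p-1)\tr[I-E]$, so $\tr[I-E] = -\tr A_p(E)/(2p-1)$ since $2p-1>0$, and inserting this into the bounds $\alpha \leq (1-p)\tr[I-E]$ and $\beta \geq -p\,\tr[I-E]$ produces the two claimed expressions. There is essentially no obstacle here; the one point deserving a moment's care is sign consistency, namely that because $\tr A_p(E)\leq 0$ and each prefactor $1-2p$ or $2p-1$ is strictly positive in its regime, the scalar multiplications preserve the direction of every inequality. This is why the result is a genuinely ``quick'' corollary—it is a mechanical rescaling of the $\tr E$- and $\tr[I-E]$-bounds into $\tr A_p(E)$-bounds, with no new estimation required.
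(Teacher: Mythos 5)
Your proposal is correct and is exactly the argument the paper intends: the corollary is obtained by solving \eqref{eqn:trace-equation} for $\tr E$ (resp.\ $\tr[I-E]$) and substituting into the bounds of Thm.~\ref{eigenvaluesums}, with the strictness of $p<1/2$ or $p>1/2$ ensuring the prefactor $|1-2p|$ is nonzero. Your sign-consistency check is the only point of care, and you have handled it properly.
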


\subsection{Computing and Bounding $\Lambda_p(E)$}\label{sec:proofs:lambda}

For a general Hermitian $A$, Thm.~\ref{thm:measure-formula} gives an explicit formula for the measure of the set of $\psi \in \SSS$ for which $\braket{\psi|A|\psi} > 0$. Letting $\alpha_1 \geq \alpha_2 \geq ... \geq \alpha_k > 0$ be the positive eigenvalues of $A$, and $0 \geq \beta_1 \geq \beta_2 \geq ... \geq \beta_m$ the negative, assuming the non-degenerate case of $\alpha_i \neq \alpha_j$ for $i \neq j$, the claim was that
\begin{equation}
\mu \Bigl\{ \psi \in \SSS : \braket{\psi|A|\psi} > 0 \Bigr\} = \sum_{i = 1}^k \frac{\alpha_i^{d-1}}{\prod_{h = 1}^m(\alpha_i - \beta_h) \prod_{j = 1, j \neq i}^k(\alpha_i - \alpha_j)}.
\end{equation}

\begin{proof}[Proof of Thm.~\ref{thm:measure-formula}]
Taking $\mu$ as the normalized uniform measure on $\SSS$, we may interpret this quantity probabilistically as the probability that $\braket{\psi|A|\psi} > 0$ for $\psi$ sampled uniformly from $\SSS$. We may sample from $\SSS$ uniformly by normalizing a vector sampled from the Gaussian distribution over $\mathbb{C}^d$, with mean $0$ and covariance operator $I$. Taking $\psi$ as sampled uniformly and $\phi$ as sampled with the standard normal Gaussian distribution on $\mathbb{C}^d$,
\begin{equation}
\begin{split}
\mu \Bigl\{ \psi \in \SSS : \braket{\psi|A|\psi} > 0 \Bigr\} & = \PPP\bigl( \braket{\psi|A|\psi} > 0 \bigr) \\
& = \PPP\left( \frac{\braket{ \phi|A|\phi }}{ \|\phi\|^2 } > 0 \right) \\
& = \PPP\bigl( \braket{\phi|A|\phi} > 0 \bigr).
\end{split}
\end{equation}

Let $u_i$ be the eigenvector corresponding to $\alpha_i$, and $v_j$ the eigenvector corresponding to $\beta_j$. We may use the eigenvalues and vectors of $A$ to compute the above probability precisely.

This Gaussian distribution is invariant under unitary transformations. Thus, if $\phi$ is sampled from this distribution, its components, the $\braket{u_i | \phi}, \braket{v_j | \phi}$, are independent random variables with the complex Gaussian distribution, with mean $0$ and variance $1$, i.e., the real and imaginary parts are independent real Gaussian random variables with mean $0$ and variance $1/2$. 

Let $A_i = \bigl|\braket{\phi|u_i}\bigr|^2, B_j = \bigl|\braket{\phi|v_j}\bigr|^2$. As a consequence of sampling from the Gaussian distribution, the $A_i, B_j$ are independent random variables with identical exponential distributions, $Exp(1/2)$.
Continuing,
\begin{equation}\label{eqn:im-so-tired}
\begin{split}
\mu \{ \psi \in \SSS : \braket{\psi|A|\psi} > 0 \} &= \PPP( \braket{\phi|A|\phi} > 0 ) \\
&= \PPP \left( \sum_{i = 1}^k \alpha_i |\braket{\phi|u_i}|^2 + \sum_{j = 1}^m \beta_j |\braket{\phi|v_j}|^2 > 0\right) \\
&= \PPP \left( \sum_{i = 1}^k \alpha_i A_i >  \sum_{j = 1}^m -\beta_j B_j \right) \,.
\end{split}
\end{equation}
This probability may be calculated explicitly, as we know the distributions of the $A_i$ and the $B_j$. First, we must derive the distributions of the respective sums. For a set of values $\lambda_i > 0$, define $S_n = \sum_{i = 1}^n \lambda_i X_i$ where the $X_i$ are i.i.d., $X_i \sim Exp(1/2)$. Note, $S_n$ is of the same form as the two sums in \eqref{eqn:im-so-tired}. Defining $f_n(c)$ as the probability density function of $S_n$, the recursive structure $S_n = S_{n-1} + \lambda_n X_n$ leads to
\begin{equation}
\begin{split}
f_1(c) & = \frac{1}{2 \lambda_1} e^{-\frac{c}{2 \lambda_1} }\,,\\
f_{n+1}(c) & = \frac{1}{2 \lambda_{n+1} } e^{-\frac{c}{2 \lambda_{n+1} } } \int_{0}^c f_{n}(s) \, e^{\frac{s}{2 \lambda_{n+1} }} \, ds \,.
\end{split}
\end{equation}
If the $\lambda_i$ are mutually distinct, then this recursion has the explicit solution
\begin{equation}\label{solution1}
f_n(c) = \sum_{i = 1}^n \frac{ e^{-\frac{c}{2 \lambda_i} } \lambda_i^{n-2} }{2 \prod_{j = 1, j \neq i}^{n} (\lambda_i - \lambda_j ) }\,,
\end{equation}
as one can verify by a calculation using the identity \cite[Eq.~(5)]{CZ14}
\begin{equation}\label{identity1}
\sum_{i=1}^n \frac{\lambda_i^{n-2}}{\prod_{j=1,j\neq i}^n (\lambda_i-\lambda_j)} =0\,,
\end{equation}
which follows from the Lagrange interpolation formula. The latter asserts that, for any polynomial $P(x)$ of degree $\leq n-1$ and any  distinct numbers $x_1,\ldots,x_n$,
\begin{equation}\label{Lagrange}
P(x) = \sum_{i=1}^n P(x_i) \prod_{j=1,j\neq i}^n \frac{x-x_j}{x_i-x_j}\,.
\end{equation}
(The Lagrange interpolation formula follows from the facts that both sides agree for $x=x_i$, and that two polynomials of degree $\leq n-1$ that agree on $n$ points agree everywhere.) For $P(x)=x^{n-2}$ and $x_i=\lambda_i$, the coefficient of $x^{n-1}$ in \eqref{Lagrange} yields \eqref{identity1}.

We may use \eqref{solution1} to give the densities for $\sum_{i = 1}^k \alpha_i A_i$ and $\sum_{j = 1}^m -\beta_j B_j$ explicitly in terms of the $\alpha_i, \beta_j$, provided the $\beta_j$ are mutually distinct. (Recall that we assumed in Thm.~\ref{thm:measure-formula} that the $\alpha_i$ are distinct, but we did not assume this for the $\beta_j$.) Densities in hand, computing the probability \eqref{eqn:im-so-tired}, and hence $\mu \{ \psi \in \SSS : \braket{\psi|A|\psi} > 0 \}$, is just a matter of integration and yields 
\begin{equation}\label{probAB}
\sum_{i=1}^k \sum_{h=1}^m \frac{\alpha_i^k (-\beta_h)^{m-1}}{(\alpha_i-\beta_h) \prod_{j=1,j\neq i}^k (\alpha_i-\alpha_j) \prod^m_{\ell=1,\ell\neq h} (\beta_\ell-\beta_h)} \,.
\end{equation}
By the Lagrange interpolation formula \eqref{Lagrange} again, this time for $n=m$, $P(x)=x^{m-1}$, $x=\alpha_i$, and $x_i=\beta_h$, we have that 
\begin{equation}
\alpha_i^{m-1} = \sum_{h=1}^m \beta_h^{m-1} \prod_{\ell=1,\ell\neq h}^m \frac{\alpha_i-\beta_\ell}{\beta_h-\beta_\ell}\,,
\end{equation}
and in the light of this identity, \eqref{probAB} can be simplified to
\begin{equation}\label{formula1}
\sum_{i=1}^k \frac{\alpha_i^k \alpha_i^{m-1}}{\prod_{j=1,j\neq i}^k (\alpha_i-\alpha_j)  \prod_{\ell=1}^m (\alpha_i-\beta_\ell)} \,,
\end{equation}
which yields \eqref{eqn:set-measure-formula}. Since the expression \eqref{eqn:im-so-tired} is a continuous function of the $\beta_j$, the formula \eqref{formula1}, and hence \eqref{eqn:set-measure-formula}, apply also in the limiting case when some of the $\beta_j$ coincide.
\end{proof}

We may use this probabilistic approach to give a proof of Thm.~\ref{thm:chernoff-theorem}, that $\Lambda_p(E) \leq 4p(1-p)$.

\begin{proof}[Proof of Thm.~\ref{thm:chernoff-theorem}]
For $p=1/2$ or $p=0$ or $p=1$ or $E=0$ or $E=I$, the inequality is trivially true, so let us assume that none of these is true of the given $p$ and $E$ with $0\leq E\leq I$. Letting 
$\alpha_1 \geq \alpha_2 \geq ... \geq \alpha_k > 0 \geq \beta_1 \geq \beta_2 \geq ... \geq \beta_m$ be the eigenvalues of $A_p(E)$, from \eqref{eqn:im-so-tired},
\begin{equation}
\Lambda_p(E) = \PPP \left( \sum_{i = 1}^k \alpha_i A_i + \sum_{j = 1}^m \beta_j B_j> 0 \right).
\end{equation}
For any fixed value $t$ with $0<t <(2\sum_{i=1}^k \alpha_i)^{-1}$, the above may be rewritten as
\begin{equation}
\Lambda_p(E) = \PPP \left( e^{(\sum_{i = 1}^k \alpha_i A_i + \sum_{j = 1}^m \beta_j B_j)t }> 1 \right),
\end{equation}
and Markov's inequality gives the bound
\begin{equation}
\label{eqn:markov-bounds}
\begin{split}
\Lambda_p(E) & \leq \EEE\left[ e^{(\sum_{i = 1}^k \alpha_i A_i + \sum_{j = 1}^m \beta_j B_j)t } \right] \\
& = \EEE\left[ \prod_{i = 1}^k e^{\alpha_i A_i t} \prod_{j = 1}^m e^{\beta_j B_j t} \right] \\
& = \prod_{i = 1}^k \EEE\left[e^{\alpha_i A_i t}\right] \prod_{j = 1}^m \EEE\left[e^{\beta_j B_j t} \right] \\
& = \prod_{i = 1}^k \frac{1}{1 - 2 \alpha_i t} \prod_{j = 1}^m \frac{1}{1 - 2 \beta_j t}\,.
\end{split}
\end{equation}
The last step used that $2\alpha_i t\leq 2\sum_i \alpha_i t<1$.
Now note that, for $0<x<y<1$, $(1-x)(1-y)$ shrinks if we replace $x\to x-dx$, $y\to y+dx$ with infinitesimal $dx>0$. Thus, when $x+y=c\in [0,1]$ is fixed, $(1-x)(1-y)$ is minimized on $[0,1]^2$ at $x=c,y=0$ or $x=0,y=c$. It follows further that $\prod_{i=1}^k (1-x_i)$, when $\sum_{i=1}^k x_i=c\in [0,1]$ is fixed, is minimized on $[0,1]^k$ by $x_1=c,x_2=\ldots=x_k=0$ (or permutations thereof), and the minimal value is $1-c$. As a consequence for $x_i=2\alpha_it$,
\be
\prod_{i=1}^k \frac{1}{1-2\alpha_it} \leq \frac{1}{1-2\alpha t}
\ee
with $\alpha = \sum_{i = 1}^k \alpha_i$. From this and \eqref{prod1+xi} for $x_j=-2\beta_j t$, setting $\beta = -\sum_{j = 1}^m \beta_j$ (note the negative), we obtain that 
\begin{equation}
\Lambda_p(E) \leq \frac{1}{1 - 2 \alpha t} \frac{1}{1 + 2 \beta t}.
\end{equation}
Fixing $\alpha, \beta$, we may choose $t$ to minimize this bound, namely $t = (\beta - \alpha)/(4 \alpha \beta)$ (which indeed lies between 0 and $(2\alpha)^{-1}$ because $\alpha-\beta=\tr A_p(E)<0$ and $\alpha,\beta>0$). This yields an upper bound of
\begin{equation}
\Lambda_p(E) \leq \frac{4 \alpha \beta}{ (\alpha + \beta)^2 }.
\end{equation}
We have that $\alpha = \beta + \tr A_p(E)$, and
\begin{equation}
\Lambda_p(E) \leq \frac{4 \beta( \beta + \tr A_p(E))}{ (2\beta + \tr A_p(E))^2 }.
\end{equation}
This bound is maximized taking the largest possible value of $\beta$. From Thm.~\ref{eigenvaluesums}, (noting the sign difference in $\beta$s), and recalling what the trace of $A_p(E)$ is, we recover the desired bound $\Lambda_p(E) \leq 4p(1-p)$.
\end{proof}

Finally, we present the proof of Thm.~\ref{thm:one-negative-eigenvalue}, which asserts that, for $p < \frac{\ln 2}{1 + \ln 2}$ and $p > \frac{1}{1 + \ln 2}$, for any $0 \leq E \leq I$ for which $A_p(E)$ has only one negative eigenvalue, $\Lambda_p(E) \leq 1/2$.

\begin{proof}[Proof of Thm.~\ref{thm:one-negative-eigenvalue}]
Fix $E$ such that the eigenvalues of $A_p(E)$ are $\alpha_1 \geq \alpha_2 \geq ... \geq \alpha_{d-1} \geq 0 > \beta$. Instead of applying \eqref{eqn:set-measure-formula} directly, it is more convenient to apply it to $A = -A_p(E)$, in which case the signs of the eigenvalues are flipped. The measure for $A_p(E)$ and the measure for $-A_p(E)$ sum to 1. Hence, 
\begin{equation}
\label{eqn:one-negative-eigenvalue-formula}
\begin{split}
\Lambda_p(E) & = 1 - \frac{ \beta^{d-1} }{ \prod_{i = 1}^{d-1} (\beta - \alpha_i) }\\
& = 1 - \frac{1}{\prod_{i = 1}^{d-1} (1 - \alpha_i/\beta)}.
\end{split}
\end{equation}
It suffices to show that $\prod_{i = 1}^{d-1} (1 - \alpha_i/\beta) \leq 2$. The arithmetic-geometric mean property gives
\begin{equation}
\label{eqn:arith-geo-mean}
\begin{split}
\left[\prod_{i = 1}^{d-1} (1 - \alpha_i/\beta)\right]^{1/(d-1)} & \leq \frac{1}{d-1}\left(d-1 - \frac{1}{\beta}\left(\sum_{i = 1}^{d-1} \alpha_i \right) \right) \\
& = 1 - \frac{1}{d-1} \frac{\tr A_p(E) - \beta}{\beta} \\
& = 1 + \frac{1}{d-1} \left(1 - \frac{\tr A_p(E)}{\beta}\right) \\
\end{split}
\end{equation}
From Cor.~\ref{late-night-corollary}, we have that for $p < 1/2$ (and also for $p=1/2$ since then $\tr A_p(E)=0$),
\begin{equation}
1- \tr A_p(E)/\beta \leq \frac{p}{1-p},
\end{equation}
and for $p > 1/2$,
\begin{equation}
1-  \tr A_p(E)/\beta \leq \frac{1-p}{p}.
\end{equation}
Using the standard fact that $(1+\frac{x}{n})^n$ converges monotonically increasingly to $e^x$ for $x>0$,
\begin{equation}
\begin{split}
\prod_{i = 1}^{d-1} (1 - \alpha_i/\beta) & \leq \left[ 1 + \frac{1}{d-1} \left(1 - \frac{\tr A_p(E)}{\beta}\right) \right]^{d-1} \\
& \leq e^{ 1 - \tr A_p(E)/\beta } \\
& \leq \begin{cases}
e^{ p/(1-p) } & \text{if }p \leq 1/2 \\
e^{ (1-p)/p } & \text{if }p > 1/2.
\end{cases}
\end{split}
\end{equation}
Taking $p < \ln 2/(1 + \ln 2)$ or $p > 1/(1 + \ln 2)$ forces $\prod_{i = 1}^{d-1} (1 - \alpha_i/\beta) \leq 2$ for all $d \geq 2$. Hence, $\Lambda_p(E) \leq 1/2$.
\end{proof}

\bigskip

\textit{Acknowledgments.}
Both authors are supported in part by NSF Grant SES-0957568. 
R.T.~is supported in part by grant no.~37433 from the John Templeton Foundation and by the Trustees Research Fellowship Program at Rutgers, the State University of New Jersey.

\end{document}